\newcommand{\brck}[1]{ \llbracket{#1}\rrbracket}
\newcommand{\e}{\varepsilon}
\renewcommand{\epsilon}{\varepsilon}
\newcommand{\M}{\mathcal{M}}
\newcommand{\lng}{\mathcal{L}}
\newcommand{\ol}{\overline}
\newcommand{\nat}{\mathbb{N}}
\newcommand{\preals}{\mathbb{R}^+}
\newcommand{\prats}{\mathbb{Q}^+}
\newcommand*\dotminus{
  \buildrel\textstyle.\over{%
    \hbox{\vrule height3pt depth0pt width0pt}{\smash-}}}
\newtheorem{definition}{Definition} 
\newtheorem{theorem}{Theorem} 
\newtheorem{lemma}{Lemma}
\newtheorem{corollary}{Corollary}
\newtheorem{example}{Example}
\begin{document}

\title{Parameterized Metatheory for Continuous Markovian Logic}



\author{Kim G. Larsen \quad Radu Mardare \quad Claus Thrane
\institute{Department of Computer Science, Aalborg University}
\institute{Selma Lagerl{\"o}fs Vej 300 - 9220 Aalborg, Denmar}
\email{\{kgl,mardare,crt\}@cs.aau.dk}
}
\def\titlerunning{Parameterized Metatheory for Continuous Markovian Logic}
\def\authorrunning{Larsen, Mardare \& Thrane}

\maketitle

\begin{abstract}
  This paper shows that a classic metalogical framework, including all
  Boolean operators, can be used to support the development of a
  metric behavioural theory for Markov processes. Previously, only
  intuitionistic frameworks or frameworks without negation and logical
  implication have been developed to fulfill this task. The focus of
  this paper is on continuous Markovian logic (CML), a logic that
  characterizes stochastic bisimulation of Markov processes with
  an arbitrary measurable state space and continuous-time
  transitions. For a parameter $\e>0$ interpreted as observational
  error, we introduce an $\e$-parameterized metatheory for CML: we
  define the concepts of $\e$-satisfiability and $\e$-provability
  related by a sound and complete axiomatization and prove a series of
  ``parameterized'' metatheorems including decidability, weak
  completeness and finite model property. We also prove results
  regarding the relations between metalogical concepts defined for
  different parameters. Using this framework, we can characterize both
  the stochastic bisimulation relation and various observational
  preorders based on behavioural pseudometrics. The main contribution
  of this paper is proving that all these analyses can actually be
  done using a unified complete Boolean framework. This extends the
  state of the art in this field, since the related works only propose
  intuitionistic contexts that limit, for instance, the use of the
  Boolean \emph{logical implication}.
\end{abstract}

\section{Introduction}
Stochastic models have successfully been used to describe the
qualitative and quantitative behavior of systems in many natural and
artificial domains. The problems addressed in this paper refer to the
most general models of Markov processes, defined for arbitrary
(analytic) state-spaces and continuous-time transitions, henceforth
\emph{continuous Markov processes} (CMPs); they subsume well-known
models such as \emph{continuous-time Markov chains} and \emph{labelled
  Markov processes} and for this reason our work can be simply
instantiated for these particular models. CMPs have been initially
introduced by Desharnais and Panangaden in \cite{Desharnais03b}. In
this paper, for technical reasons, we use the definition of CMPs
proposed by the first two authors and Cardelli in \cite{Cardelli11a},
which exploits an equivalence between the definitions of
\emph{Harsanyi type spaces} \cite{Moss04} and a coalgebraic view of
labelled Markov processes~\cite{deVink99} proved, for instance, by
Doberkat in \cite{Doberkat07}.

In this paper the class of CMPs define the semantics for Continuous
Markovian Logic (CML) \cite{Cardelli11a,Cardelli11b}. This is a
multimodal logic endowed with modalities $L_r, M_r$ for $r\in\prats$,
similar to the ones used by the Aumann system \cite{Aumann99a}, that
approximates the transition rates. For instance, a process satisfies
$L_r\phi$ if the rate of its transition from the initial state to a
state satisfying $\phi$ is at least $r$. In \cite{Cardelli11a} it has
been proved that this logic characterizes the stochastic bisimulation
of CMPs.

Despite the elegant theories supporting the concepts of stochastic and
probabilistic bisimulations and their relation to logics
\cite{Larsen91}, these concepts remain too strict for applications. In
modelling, the values of the rates or probabilities are often
approximated and consequently, one is interested to know whether two
processes that differ by a small amount in real-valued parameters show
similar (not necessarily identical) behaviours. In such cases, instead
of a bisimulation relation, one needs a metric concept to estimate the
degree of similarity of two systems in terms of their behaviours. The
metric theory for Markov processes was initiated by Desharnais et
al.~\cite{Desharnais04} and has been greatly developed and explored by
van Breugel, Worrell and
others~\cite{vanBreugel01b,vanBreugel03}. Similar notions have been
proposed in literature for less general models. Bringing with them
notions such as the point-wise simulation distance defined in
\cite{Jou90} for discrete probabilistic systems, and the discounted
distances proposed in \cite{Alfaro03,Alfaro09} for weighted systems,
and of the quantified similarities of timed systems studied in
\cite{Henzinger05} and \cite{TFL10} (see also \cite{crt} for an overview).

One way of defining these behavioural distances was proposed by
Kozen~\cite{mu} and consists in replacing the classic logical
framework used to encode properties of processes with a non-classical
real-valued framework that will interpret logical formulae as
functional expressions mapping states to reals. In this way, we get a
relaxation of the satisfiability relation which is replaced by a
function that reports the ``degree of satisfiability'' between a Markov
process and a logical property. This further induces a
\emph{behavioural pseudometric} on processes with the stochastic
bisimulation as its kernel and measuring the distance between
processes in terms of their behavioural similarity. Such formalisms
have since been proposed for Markov systems by Desharnais, Panangaden
and others \cite{Desharnais04,Panangaden09}.

It was hoped that these metrics would provide a quantitative
alternative to logic, but this did not happen. One reason could
originate in the fact that all this ``metric reasoning'' focused
exclusively on the semantics of the logic while a syntactic or a
metalogical counterpart did not develop until recently. Such a logical
perspective on distance was proposed by the first two authors of this
paper and Cardelli in \cite{Cardelli11a}, where it was emphasized
that, in the context of a completely axiomatized logic, the semantic
distance between Markovian processes implicitly induces, via Hausdorff
metrics, a distance between logical properties that can be interpreted
as a \emph{measure of provability} for CML. On this line,
\cite{Cardelli11a} and \cite{Cardelli11b} contain the open ideas of a
research program that we have followed ever since. This research aims
to understand the relation between the pseudometric space of Markov
processes and the pseudometric space of logical formulae i.e., the
relation between the measure of similarity for Markov processes and
the measure of provability in a corresponding stochastic/probabilistic
logic. Eventually, in \cite{Larsen12a}, the first two authors in
collaboration with Prakash Panangaden have identified a metric analog
of Stone duality that relates the two pseudometric spaces and in
\cite{Larsen12b} we have studied how convergence in the open ball
topologies induced by the two pseudometrics ``agree to the limit''.

However, we have yet to clarify what the kernel of the distance
between logical formulae is. It was shown in \cite{Larsen12b} that it
is possible to have formulae at distance $0$ that are not logically
equivalent, and we have characterized this kernel for some limited
fragments of CML. But the full picture has not been yet achieved. One
reason for this difficulty originates from the fact that we have no
pure metalogical definition of this distance, as it is always
implicitly obtained from the definition of the behavioural
pseudometrics, hence it depends of the semantics. This is exactly what
we achieve in this paper: a metalogical definition of the behavioural
distance.

This paper is a step forward in the process of understanding this
distance from a logical perspective. We define a parameterized
metatheory for CML, where the parameter $\e\geq 0$ is interpreted as
an observational error which allows us to express properties
approximating the behavior of a given CMP. This metatheory consists in
defining an \emph{$\e$-semantics}, i.e., \emph{$\e$-satisfiability
  relation} denoted $\models_\e$, and to develop a complete
Hilbert-style axiomatization for a corresponding concept of
\emph{$\e$-provability}, denoted $\vdash_\e$. The classic semantics
and provability relation of CML are, in this context, the
$0$-semantics and $0$-proof theory.

This parametric metatheory allows us to transfer logical properties
between various semantics defined for different parameters. For
instance, we can translate $0$-satisfiability into $\e$-satisfiability
and reverse, or $0$-provability into $\e$-provability and reverse
using appropriate encodings. Exactly this allows one to see the
behavioural pseudometrics from a logical perspective. We show that
the distance between two CMPs $m_1$ and $m_2$ can, in fact, be defined
by the infimum of the set of values $\e$ such that for any CML formula
$\phi$, $m_i\models\phi$ iff $m_j\models_\e\phi$, where
$\{i,j\}=\{1,2\}$.

We develop the parametric metatheory as a classic metatheory and in
addition to the sound-complete axiomatization we prove a series of
metatheorems including an $\e$-deduction theorem, a $\e$-finite model
property and some $\e$-decidability results for CML.

The major contribution of this paper consists in the fact that this
entire development respects the classic Boolean restrictions. So fare all 
attempts of realizing Kozen's idea \cite{mu} and defining a
quantitative version of the satisfiability relation, have faced
noticeable problems related to the treatment of negation. Often in the
papers treating this argument, negation is either eliminated,
restricted to atomic propositions or considered in a non-Boolean
context \cite{FGK10, FLT10, Desharnais08}. This restriction is an
impediment for the use of classic reasoning. For instance in
\cite{Desharnais08}, the definition of $\e$-satisfiability contains
the following rules defined for an arbitrary Markov process $m$:
\newcommand{\sem}[1]{\llbracket #1 \rrbracket}
\begin{align*}
  m \models_\e &\lnot \phi ~\text{iff}~ m\not\models_{-\e} \phi\\
  m \models_\e & \langle a \rangle_{\delta}\phi ~\text{iff}~ \theta(m,\sem{\phi}_\e) > \delta - \e
\end{align*}
where $\theta(m,M)$ is the probability of a transition from $m$ to a
state in the set $M$. Observe that the rule for negation requires to
transport information from ``$\e$-semantics'' to ``$-\e$-semantics'' and
that it is obviously non-Boolean. For instance, in the case
$\theta(m,\sem{\phi}_\e)=\delta$ one can prove, using the previous
rules, that for $\e>0$ we have
$$m\models_\e\langle a \rangle_\delta \phi \land \lnot \langle a \rangle_\delta \phi.$$ 
In other words, the logic is inconsistent if it is interpreted in a
Boolean context.

In the light of this observation, one can see the real contribution of
our paper. We show that it is possible to obtain the sought after
behavioral distances and remain Boolean and classic to all logical
levels. Of course, one can argue that an intuitionistic approach is as
good for applications as the classic Boolean approach is, and we
cannot argue against this. But we believe that for a deeper
understanding of Markov processes and for providing a strong
theoretical background for an approximation theory of Markov
processes, a classic logical framework is more useful. In fact, in
\cite{Larsen12a} a special Boolean algebra (called \emph{Aumann
  algebra}) is identified with operators that corresponds to CML and
we proved Stone duality results between these algebras and Markov
processes. This enforces our trust that the Boolean setting is the
right one for studying properties of Markov processes.

\bigskip

To summarize, the achievements of this work are as follows.
\begin{itemize}
\item We develop a parameterized metatheory for continuous Markovian logic that extends the classic metatheory. The parameter can be interpreted as observational error.
\item We define the concept of $\e$-satisfiability and identify for it an appropriate concept of $\e$-provability with a sound and complete Hilbert-style axiomatization.
\item We prove that classic metatheorems about CML remain true in the parametric semantics. Such properties are the weak completeness, the finite model property and decidability.
\item We show that this parameterized metatheory can be used to define a behavioural pseudometric, which is a distance between CMPs that characterizes the similarity of two processes from the point of view of their behaviours.
\item We identify two behavioural orders that have, in the parametric semantics, similar logical interpretations to the bisimulation in the classic semantics.
\item This entire development is essentially Boolean.  
\end{itemize}


\section{Preliminary definitions}\label{preliminaries} 

In this section we introduce some basic notations and concepts used
throughout this paper.

For arbitrary sets $M,N$, we denote by $2^M$ the powerset of $M$, by
$M\uplus N$ their disjoint union and by $[M\to N]$ the set of
functions from $M$ to $N$.

Given a relation $\mathfrak R\subseteq M\times M$, the $\mathfrak
R$-closure of a set $N\subseteq M$ is the set $N^{\mathfrak R}=\{m\in
M~|~\exists n\in N, (n,m)\in\mathfrak R\}$; we say that $N$ is
$\mathfrak R$-closed iff $N^{\mathfrak R}\subseteq N$. If
$\Sigma\subseteq 2^M$, then $\Sigma(\mathfrak R)$ denotes the set of
$\mathfrak R$-closed elements of $\Sigma$.

A set $\Sigma\subseteq 2^M$ is a \emph{$\sigma$-algebra over $M$} if
it contains $M$ and it is closed under complement and countable union.
Given a $\sigma$-algebra $\Sigma$ over $M$, the tuple $(M,\Sigma)$ is
called a \emph{measurable space} and the elements of $\Sigma$,
\emph{measurable sets}. A set $\Omega\subseteq 2^M$ is a
\emph{generator for $\Sigma$} if $\Sigma$ is the closure of $\Omega$
under complement and countable union.

Given a measurable space $(M,\Sigma)$, a function
$\mu:\Sigma\to\preals$ is a measure iff $\mu(\emptyset)=0$ and for any
sequence $\{N_i\mid i\in I\subseteq\nat\}$ of pairwise disjoint
measurable sets, $\mu(\bigcup_{i\in I} N_i)=\sum_{i\in I}{\mu(N_i)}.$
The set of all measures on $(M,\Sigma)$ is denoted by
$\Delta(M,\Sigma)$.  We organize $\Delta(M,\Sigma)$ as a measurable
space by considering the $\sigma$-algebra $\mathfrak F$ generated, for
arbitrary $S\in\Sigma$ and $r>0$, by the sets
$F^r_S=\{\mu\in\Delta(M,\Sigma):\mu(S)\geq r\}.$

Given two measurable spaces $(M,\Sigma)$ and $(N,\Sigma')$, a mapping
$f:M\to N$ is \emph{measurable} if $\mbox{for any }T\in\Sigma',
f^{-1}(T)\in\Sigma$. We use $\llbracket M\to N\rrbracket$ to denote
the class of measurable mappings from $(M,\Sigma)$ to $(N,\Sigma')$,
assuming of course that $\Sigma$ and $\Sigma'$ are clear from the
context.

Central for this paper is the notion of \emph{analytic space} that
supports some of the main results. As properties of analytic spaces
are however not used here directly, we only recall the main
definitions. For detailed discussion on this topic related to Markov
processes, the reader is referred to \cite{Panangaden09} (Section 7.5)
or to \cite{Doberkat07} (Section 4.4).

A metric space $(M,d)$ is \emph{complete} if every Cauchy sequence
converges in $M$. A \emph{Polish space} is the topological space
underlying a complete metric space with a countable dense subset. An
\emph{analytic space} is the image of a Polish space under a
continuous function between Polish spaces.


\section{Continuous Markov processes}\label{CMP}

In this section we introduce continuous Markov processes (CMPs) \cite{Cardelli11a,Cardelli11b} which are models of stochastic systems with analytic state space and continuous-time transitions. The definition is similar to the one proposed by Desharnais and Panangaden in \cite{Desharnais03b}, but it exploits an equivalence between the definitions of Harsanyi type spaces
\cite{Moss04} and a coalgebraic view of labelled Markov processes~\cite{deVink99} proved, for instance, by Doberkat in \cite{Doberkat07}. However, with respect to \cite{Cardelli11a,Cardelli11b} or to \cite{Panangaden09,Desharnais02,Doberkat07}, we do not consider action labels. The labels can easily be added without changing any aspects of the theory.

\begin{definition}[Continuous Markov processes]
  Given an analytic set $(M,\Sigma)$, where $\Sigma$ is the Borel
  $\sigma$-algebra generated by the topology, a \emph{continuous
    Markov kernel} (CMK) is a tuple $\M=(M,\Sigma,\theta)$, where
  $\theta\in\llbracket M\to\Delta(M,\Sigma)\rrbracket$ is the
  \emph{transition function}. The set $M$ is the \emph{support-set of
    $\M$} denoted $supp(\M)$.
Whenever $m\in M$, then $(\M,m)$ is a \emph{continuous Markov process}.
\end{definition}

Notice that $\theta$ is a measurable mapping between $(M,\Sigma)$ and $(\Delta(M,\Sigma),\mathfrak F)$, where $\mathfrak F$ is the sigma algebra on $\Delta(M,\Sigma)$ defined in the preliminaries. This condition on $\theta$ is equivalent with the conditions on the two-variable \emph{rate function} used in \cite{Panangaden09, Desharnais02, Desharnais03b} to define
continuous Markov processes.



\subsection{Bisimulation}

Stochastic bisimulation for CMPs follows the line of Larsen-Skou
probabilistic bisimulation
~\cite{Larsen91,Desharnais02,Panangaden09}. Recall that
$\Sigma(\mathfrak R)$ in the next definition denotes the $\mathfrak
R$-closed sets of $\Sigma$.

\begin{definition}[Stochastic Bisimulation]
  Given a CMK $\M=(M,\Sigma,\theta)$ a binary relation $\mathfrak
  R\subseteq M\times M$ is a \emph{stochastic bisimulation relation}
  if whenever $(m,n)\in\mathfrak R$, for any $C\in\Sigma(\mathfrak
  R)$, $$\theta(m)(C)=\theta(n)(C).$$ Two processes $(\M,m)$ and
  $(\M,n)$ are \emph{stochastic bisimilar}, written $m\sim_\M n$, if
  they are related by a stochastic bisimulation relation.
\end{definition}

Observe that, for any CMK $\M$ there exist stochastic bisimulation
relations: for instance, the identity relation on its support-set is
such a relation. The relation $\sim_\M$ is the largest stochastic
bisimulation relation.
 
\begin{definition}[Disjoint union]
If $\M=(M,\Sigma,\theta)$ and $\M'=(M',\Sigma',\theta')$ are CMKs, then $\M''=(M'',\Sigma'',\theta'')$ defined by $M''=M\uplus M'$, $\Sigma''$ is the $\sigma$-algebra generated by $\Sigma\uplus\Sigma'$ and 
\begin{align*}
  \theta''(m)(N\uplus N')=
  \begin{cases}
    \theta(m)(N) & \textrm{if } m\in M\\
    \theta'(m)(N') & \textrm{if }m\in M'
  \end{cases}
\end{align*}
for arbitrary $N\in\Sigma$ and $N'\in\Sigma'$, is the disjoint union of $\M$ and $\M'$ denoted by $\M''=\M\uplus\M'$.
\end{definition}
Observe that the disjoint union of CMKs is a CMK. The previous
definition allows us to define stochastic bisimulation between
processes from different CMKs. If $m\in M$ and $m'\in M'$, we say that
$(\M,m)$ and $(\M',m')$ are \emph{bisimilar} written $(\M,m)\sim
(\M',m')$ whenever
$m\sim_{\M\uplus\M'}m'$. 

\subsection{Generators}
The definition of bisimulation can be amended to focus on two particular classes of generators of the $\sigma$-algebra.

\begin{definition}[Bisimulation Generators]
  Consider the CMK $\M=(M,\Sigma,\theta)$ and let $$\Theta=\{\theta(m)^{-1}([0,r])\mid m\in M, r\in\preals\}.$$ 
  \begin{itemize}
  \item The \emph{bisimulation generator of $\M$}, denoted by $G_\M$, is the closure of $\Theta$ under union and intersection. 
  \item The \emph{extended bisimulation generator of $\M$}, denoted $\ol{G_\M}$, is the closure of $\Theta$ under union, intersection and complement.
  \end{itemize}
\end{definition}

Observe that the bisimulation generators are generators of $\Sigma(\sim)$ which is a sub-sigma algebra of $\Sigma$, i.e., $\Sigma(\sim)$ is the closure of both $G_\M$ and $\ol{G_\M}$ under
complement and countable union. This observation allows us to characterize the stochastic bisimulation from the perspective of the bisimulation generators and to propose some generalizations of the stochastic bisimulation. But more importantly, they will be instrumental later, for obtaining our results on logical characterization.

\begin{theorem}\label{t1}
  Given a CMK $\M=(M,\Sigma,\theta)$, a relation $\mathfrak R\subseteq M\times M$ is a stochastic bisimilarity relation iff one (or both) of the following equivalent conditions is satisfied.
  \begin{itemize}
  \item whenever $(m,n)\in\mathfrak R$, $\theta(m)(C)=\theta(n)(C)$ for any $C\in G_\M$,
  \item whenever $(m,n)\in\mathfrak R$, $\theta(m)(C)=\theta(n)(C)$ for any $C\in \ol{G_\M}$.
  \end{itemize}
\end{theorem}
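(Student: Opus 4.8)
The plan is to reduce the statement to two facts about a pair of states $m,n$: (a)~that the three properties ``$\theta(m),\theta(n)$ agree on $G_\M$'', ``$\theta(m),\theta(n)$ agree on $\ol{G_\M}$'' and ``$\theta(m),\theta(n)$ agree on $\Sigma(\sim)$'' are mutually equivalent; and (b)~that the last of these holds exactly when $(m,n)\in{\sim}$. Once (a) and (b) are in place, both bulleted conditions say precisely that every pair related by $\mathfrak R$ is a pair of bisimilar states, i.e.\ $\mathfrak R\subseteq{\sim}$ --- and in particular every stochastic bisimulation satisfies them, since $\mathfrak R\subseteq{\sim}$ by maximality of $\sim$.

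To prove (a) I would first observe that $G_\M$ is closed under finite intersection, hence a $\pi$-system, and that $\ol{G_\M}$ is in addition closed under complement, hence an algebra of sets; by the observation preceding the theorem both of them generate $\Sigma(\sim)$ as a $\sigma$-algebra. Then, if $\theta(m)$ and $\theta(n)$ agree on $G_\M$ --- respectively on $\ol{G_\M}$ --- the uniqueness part of Dynkin's $\pi$--$\lambda$ theorem forces them to agree on the generated $\sigma$-algebra $\Sigma(\sim)$. The converse implications are immediate from $G_\M\subseteq\ol{G_\M}\subseteq\Sigma(\sim)$. This already delivers the stated equivalence of the two bulleted conditions.

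For (b), one inclusion is immediate: $\sim$ is itself a stochastic bisimulation, and the $\sim$-closed measurable sets are by definition exactly the members of $\Sigma(\sim)$, so $(m,n)\in{\sim}$ forces $\theta(m)(C)=\theta(n)(C)$ for every $C\in\Sigma(\sim)$. For the reverse inclusion I would let $\approx$ denote the relation ``$\theta(m)$ and $\theta(n)$ agree on $\Sigma(\sim)$''. The first inclusion gives $\sim\,\subseteq\,\approx$, hence $\Sigma(\approx)\subseteq\Sigma(\sim)$ (a larger relation admits fewer closed sets); therefore every $\approx$-related pair agrees on all of $\Sigma(\sim)$ and a fortiori on $\Sigma(\approx)$, so $\approx$ satisfies the transfer property of a stochastic bisimulation. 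By maximality of $\sim$ this yields $\approx\,\subseteq\,\sim$, so $\approx\,=\,\sim$; hence $\sim$ is exactly the set of pairs of states agreeing on $\Sigma(\sim)$, equivalently --- via (a) --- on $G_\M$, equivalently on $\ol{G_\M}$. Together with the first paragraph this finishes the proof.

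The step I expect to be the main obstacle is the measure-uniqueness invocation inside (a): the transition measures $\theta(m)$ need not be finite, so the elementary finite-measure version of the $\pi$--$\lambda$ theorem does not apply verbatim, and one has to use the specific structure of $\theta$ (in particular the fact that it is measurable into $(\Delta(M,\Sigma),\mathfrak F)$, whose generating events only record inequalities $\mu(S)\geq r$) to extract the $\sigma$-finiteness that the uniqueness argument needs. The remaining ingredients --- that $G_\M$ is a $\pi$-system, the order-reversal between a relation and its family of closed sets, and the maximality argument identifying $\approx$ with $\sim$ --- are routine.
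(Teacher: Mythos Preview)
The paper states this theorem without proof, so there is no reference argument to compare against directly.

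Your reduction via (a) and (b) is clean, but it only delivers one direction of the biconditional. You show that each bulleted condition is equivalent to $\mathfrak R\subseteq{\sim}$, and then note that every stochastic bisimulation $\mathfrak R$ satisfies $\mathfrak R\subseteq{\sim}$. What is missing is the converse: that $\mathfrak R\subseteq{\sim}$ forces $\mathfrak R$ to be a stochastic bisimulation in the sense of the paper's definition, i.e.\ that $\theta(m)(C)=\theta(n)(C)$ for every $C\in\Sigma(\mathfrak R)$. This is not automatic, because $\mathfrak R\subsetneq{\sim}$ gives $\Sigma(\mathfrak R)\supsetneq\Sigma(\sim)$, and agreement on the smaller family does not propagate to the larger one. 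In fact the converse fails outright: on $M=\{a,b\}$ with $\theta(a)=\delta_b$ and $\theta(b)=\delta_a$ one has $a\sim b$, hence $G_\M\subseteq\Sigma(\sim)=\{\emptyset,M\}$, so $\mathfrak R=\{(a,b)\}$ satisfies both bulleted conditions; yet $\{b\}\in\Sigma(\mathfrak R)$ and $\theta(a)(\{b\})=1\neq 0=\theta(b)(\{b\})$, so $\mathfrak R$ is not a stochastic bisimulation. A similar three-point example shows the converse can fail even for equivalence relations.

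So either ``stochastic bisimilarity relation'' in the statement is meant to denote something weaker than the earlier ``stochastic bisimulation relation'' --- plausibly just a subrelation of $\sim$, in which case your (a)+(b) is already a complete proof --- or the statement needs an additional hypothesis. Either way, the argument you give does not close the biconditional as literally written; you should flag this reading issue explicitly. The $\sigma$-finiteness concern you raise for the $\pi$--$\lambda$ step is a genuine technical point but is secondary to this structural gap.
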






\section{Continuous Markovian Logics}\label{logic}

In this section we recall the continuous Markovian logic (CML)
introduced and studied in \cite{Cardelli11a,Cardelli11b}. This logic
extends the probabilistic logics for discrete-time Markov processes
\cite{Larsen91,Desharnais02,Panangaden09} and for Harsanyi type spaces
\cite{Fagin94,Zhou07} to stochastic domains and it characterizes the
stochastic bisimulation. In addition to the Boolean operators, this
logic is endowed with \emph{stochastic modal operators} that
approximate the rates of transitions. In the original definition,
$L_r\phi$ is a property of a CMP $(\M,m)$ whenever the rate of the
transition from $m$ to the class of states satisfying $\phi$ is
\emph{at least $r$}.


\begin{definition}[Syntax]
  The set $\lng$ of formulae of CML is generated by the following
  grammar, for arbitrary $r\in\prats$.
  $$\lng:~~~\phi:=\top~|~\lnot\phi~|~\phi\land\phi~|~L_r\phi.$$
\end{definition}

As usual, we work with all the Boolean operators, including
$\bot=\lnot\top$. In addition, we isolate two useful sublanguages of
$\lng$:
$$\lng^+:~~\psi:=\top~|~\psi\land\psi~|~\psi\lor\psi~|~L_r\psi \quad~\text{and}\quad \lng^-=\{\lnot\phi\mid\phi\in\lng^+\}.$$

\subsection{Parameterized Semantics: $\e$-satisfiability}

In\cite{Cardelli11a,Cardelli11b} the first two authors in
collaboration with Cardelli defined the semantics of CML for arbitrary
CMPs, henceforth the \emph{classic semantics} for CML. We will take a
similar approach in this paper with the difference that the
satisfiability relation is parameterized. Thus, for each rational
$\e\geq 0$, we introduce an $\e$-semantics that provides an
approximation of the classic semantics. The $\e$-semantics can be seen
as an ``approximation from below'' of the classic semantics: while
$L_r\phi$ is interpreted at $m$ as \emph{``the rate of the transitions
  from $m$ to the class of the states satisfying $\phi$ is at least
  $r$''}, in the $\e$-semantics it means that \emph{``the rate of the
  transitions from $m$ to the class of the states satisfying $\phi$ is
  at least $r-\e$''}. In this way one can encode observational errors
in the logic. Unlike the similar approach of \cite{Desharnais08}, we
propose a Boolean semantics.

\begin{definition}[$\e$-Satisfiability]
  For an arbitrary rational $\e\geq 0$, the \emph{$\e$-satisfiability relation} $\models_\e\subseteq\mathfrak M\times\lng$ is defined inductively on the structure of $\phi\in\lng$, as follows.
  \begin{itemize}
  \item $m\models_\e \top$ always,
  \item $m\models_\e\lnot\phi$ iff it is not the case that
    $m\models_\e\phi$, 
  \item $m\models_\e\phi\land\psi$ iff $m\models_\e\phi$ and
    $m\models_e\psi$,
  \item $m\models_\e L_r\phi$ iff $\theta(m)(\llbracket
    \phi\rrbracket_\e)+\e\geq r$,
  \end{itemize}
where $\llbracket \phi\rrbracket_\e=\{m\in\mathfrak M~|~m\models_\e\phi\}$.
\end{definition}

Notice that the classic semantics for CML introduced in
\cite{Cardelli11a,Cardelli11b} is nothing else but $0$-semantics,
since $\models_0{}={}\models$.

\begin{example}
  Consider the CMK $\M=(M,2^M,\theta)$ represented in Figure
  \ref{ex1}, where $M=\{m,m_1,m_2,m_3,m_4,m_5\}$ and $\theta$ is
  defined by the values $r,s,s',u\in\prats$ that label the transition
  arrows\footnote{For simplicity we only represented the transitions
    with strict positive rates.}. We can now understand the difference
  between the classic and the $\e$-semantics. For instance,
$$m\models L_{s+s'} L_u\top$$ 
since $m_2\sim m_4$, $\theta(m)(\{m_2,m_4\})=s+s'$ and $m_2\models
L_u\top$ because $m_3\sim m_5$ and $\theta(m_2)(\{m_3,m_5\})=u$.

Similarly, for some $\e>0$, $$m\models_\e L_{s+s'+\e} L_{u+\e}\top$$
since $\theta(m)(\{m_2,m_4\})=s+s'\geq(s+s'+\e)-\e$ and $m_2\models_\e
L_{u+\e}\top$ because $\theta(m_2)(\{m_3,m_5\})=u\geq (u+\e)-\e$.

On the other hand, $$m\not\models L_{s+s'+\e} L_{u+\e}\top$$ since
$\theta(m)(\{m_2,m_4\})=s+s'\not\geq(s+s'+\e)$ and $m_2\not\models
L_{u+\e}\top$ because $\theta(m_2)(\{m_3,m_5\})=u\not\geq (u+\e)$.
Which is exactly the way one may see $\e$ as an observational
error.
\end{example}

\begin{figure}[t]
  \centering
      \begin{tikzpicture}[->,shorten >=1pt, auto, node distance=.75cm,
        initial text=, scale=0.75]
        
        \begin{scope}[shape=circle, outer sep=1pt,minimum
          size=5pt,inner sep=1pt, node distance=1.9cm] 
          \node (t0) {$m$};        
          \node (t1) at (200:1.9){$m_1$};
          \node (t5) [below of=t0] {$m_2$};
          \node (t4) [below of=t5] {$m_3$};
          \node (t6) at (340:1.9){$m_4$};
          \node (t7) [below of=t6]{$m_5$};
        \end{scope}        
	            
        \begin{scope}
          \draw (t0) -- node[above] {$r$} (t1);
          \draw (t0) -- node[above] {$s$} (t6);
          \draw (t0) -- node[right] {$s'$} (t5);
          \draw (t5) -- node[right] {$u$} (t4);
          \draw (t6) -- node[right] {$u$} (t7);
       \end{scope}

     \end{tikzpicture}
\caption{A Markov process}\label{ex1}
\end{figure}
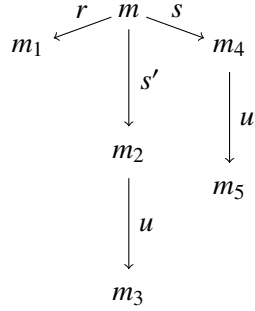

The semantics of $L_r\phi$ is well defined only if $\llbracket
\phi\rrbracket_\e$ is measurable. This is guaranteed by the
fact that $\theta$ is a measurable mapping between $(M,\Sigma)$ and
$(\Delta(M,\Sigma),\mathfrak F)$, as proved in the next lemma.

\begin{lemma}\label{measurability}
For any $\phi\in\lng$, $\llbracket\phi\rrbracket_\e\in\Sigma$.
\end{lemma}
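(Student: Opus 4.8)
The plan is to prove this by induction on the structure of $\phi \in \lng$, exactly following the inductive clauses of the $\e$-satisfiability relation. The statement to establish is that $\llbracket \phi \rrbracket_\e = \{m \in \mathfrak{M} \mid m \models_\e \phi\}$ is a measurable set. One subtlety: $\mathfrak{M}$ is the class of \emph{all} continuous Markov processes, which is not itself a single measurable space. So I first need to read the statement as a statement about a fixed CMK $\M = (M,\Sigma,\theta)$: for each fixed $\M$, the set $\llbracket \phi \rrbracket_\e \cap M$ (the states of $\M$ satisfying $\phi$) lies in $\Sigma$. With that reading fixed, the induction proceeds cleanly.

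For the base case, $\llbracket \top \rrbracket_\e = M \in \Sigma$ since $\Sigma$ is a $\sigma$-algebra over $M$. For the Boolean cases, $\llbracket \lnot\phi \rrbracket_\e = M \setminus \llbracket \phi \rrbracket_\e$, which is in $\Sigma$ by closure under complement and the inductive hypothesis; and $\llbracket \phi \land \psi \rrbracket_\e = \llbracket \phi \rrbracket_\e \cap \llbracket \psi \rrbracket_\e \in \Sigma$ by closure under (finite, hence countable) intersection and the inductive hypotheses. The interesting case is $\phi = L_r\psi$. By definition $m \models_\e L_r\psi$ iff $\theta(m)(\llbracket \psi \rrbracket_\e) + \e \geq r$, i.e., iff $\theta(m)(\llbracket \psi \rrbracket_\e) \geq r - \e$. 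By the inductive hypothesis $S := \llbracket \psi \rrbracket_\e \in \Sigma$, so the set $F^{r-\e}_S = \{\mu \in \Delta(M,\Sigma) : \mu(S) \geq r - \e\}$ is (by definition of the $\sigma$-algebra $\mathfrak{F}$ on $\Delta(M,\Sigma)$) a measurable set in $\mathfrak{F}$ when $r - \e > 0$; when $r - \e \leq 0$ the condition is satisfied by every measure, so the relevant set is all of $\Delta(M,\Sigma) \in \mathfrak{F}$. Then $\llbracket L_r\psi \rrbracket_\e = \theta^{-1}(F^{r-\e}_S)$ (or $\theta^{-1}$ of the appropriate measurable set), which lies in $\Sigma$ because $\theta \in \llbracket M \to \Delta(M,\Sigma) \rrbracket$ is measurable as a map from $(M,\Sigma)$ to $(\Delta(M,\Sigma),\mathfrak{F})$.

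The main point that needs a little care — and which I expect to be the only real obstacle — is the modal case: one must check that the set $F^{r-\e}_S$ (or its natural variant when the threshold is non-positive) genuinely belongs to $\mathfrak{F}$. The $\sigma$-algebra $\mathfrak{F}$ is defined as generated by the sets $F^t_S$ for $S \in \Sigma$ and $t > 0$, so sets of the form $F^t_S$ with $t > 0$ are in $\mathfrak{F}$ by construction, and $\Delta(M,\Sigma) \in \mathfrak{F}$ since $\mathfrak{F}$ contains the whole space. Thus for every rational $\e \geq 0$ and rational $r$, the threshold $r - \e$ is either positive (use $F^{r-\e}_S$) or non-positive (use $\Delta(M,\Sigma)$), and in both cases we land in $\mathfrak{F}$; pulling back along the measurable map $\theta$ completes the step. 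This closes the induction and establishes $\llbracket\phi\rrbracket_\e \in \Sigma$ for every $\phi \in \lng$.
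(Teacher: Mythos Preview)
Your proof is correct and follows exactly the approach the paper intends: structural induction on $\phi$, with the modal case handled by writing $\llbracket L_r\psi\rrbracket_\e$ as $\theta^{-1}$ of a generator $F^{r-\e}_S$ (or of the whole space when $r-\e\leq 0$) and invoking measurability of $\theta$. The paper does not spell out this proof, but the sentence preceding the lemma and the corollary immediately after it (which states that the sets $\llbracket\phi\rrbracket_\e$ coincide with the bisimulation generators $G_\M$ and $\ol{G_\M}$) make clear that the same structural induction is what is meant; your careful case split on the sign of $r-\e$ and your remark about reading the statement relative to a fixed CMK are both appropriate clarifications.
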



  



We extend the classical metalogical concepts to the parametric metatheory.

\begin{definition}\label{meta1}
Given a rational $\e\geq 0$, a formula $\phi$ is \emph{$\e$-satisfiable} if there exists
$m\in supp(\M)$ such that $m\models_\e\phi$. We say that $\phi$ is
\emph{$\e$-valid}, denoted by $\models_\e\phi$, if $\lnot\phi$ is not
$\e$-satisfiable. 
\end{definition}

For notational convenience we will write $m\not\models_\e\phi$ when it
is not the case that $m\models_\e\phi$, and use $\models$ in place of
$\models_0$. The proof of the previous lemma reveals a deeper result
connecting the $\e$-semantics, if we involve our notion of
bisimulation generators.

\begin{corollary}\label{c1}
  For any rational $\e\geq0$,
  $G_\M=\{\llbracket\phi\rrbracket_\e\mid\phi\in\lng^+\}$ and
  $\ol{G_\M}=\{\llbracket\phi\rrbracket_\e\mid\phi\in\lng\}$.
\end{corollary}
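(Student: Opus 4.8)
The plan is to prove each of the two identities by double inclusion, with every inclusion obtained by structural induction --- in effect by rerunning the induction behind Lemma~\ref{measurability} and recording exactly which closure operations it uses, so that ``$\llbracket\phi\rrbracket_\e\in\Sigma$'' gets upgraded to ``$\llbracket\phi\rrbracket_\e\in G_\M$'' for $\phi\in\lng^+$ and to ``$\llbracket\phi\rrbracket_\e\in\ol{G_\M}$'' for $\phi\in\lng$, while conversely every set in $G_\M$ (respectively in $\ol{G_\M}$) gets realized as some $\llbracket\phi\rrbracket_\e$.

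For the inclusions \emph{from formulas to generators} I would induct on $\phi$. The base case is $\llbracket\top\rrbracket_\e=M\in G_\M$. For $\phi=\psi\land\psi'$ I would use $\llbracket\psi\land\psi'\rrbracket_\e=\llbracket\psi\rrbracket_\e\cap\llbracket\psi'\rrbracket_\e$ and closure of $G_\M$ and of $\ol{G_\M}$ under intersection; for $\phi=\psi\lor\psi'\in\lng^+$, the dual identity and closure under union; for $\phi=\lnot\psi$, which arises only in the $\lng$ case, $\llbracket\lnot\psi\rrbracket_\e=M\setminus\llbracket\psi\rrbracket_\e$ together with closure of $\ol{G_\M}$ under complement. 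The modal case $\phi=L_r\psi$ carries the weight: by the computation in the proof of Lemma~\ref{measurability}, $\llbracket L_r\psi\rrbracket_\e=\{m\in M\mid\theta(m)(\llbracket\psi\rrbracket_\e)\ge r-\e\}$; the induction hypothesis gives $\llbracket\psi\rrbracket_\e\in G_\M$ --- here it is crucial that $\psi\in\lng^+$, so the argument stays inside $G_\M$ and never needs $\ol{G_\M}$ --- and one then recognizes a set of this shape as one of the bisimulation generators making up $\Theta$, hence as a member of $G_\M$.

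For the inclusions \emph{from generators to formulas} I would argue by induction on the way $G_\M$ (respectively $\ol{G_\M}$) is built up from $\Theta$. At the bottom, each generator in $\Theta$ is realized by a formula of $\lng^+$ --- of the form $L_q\top$ for the simplest generators, and in general obtained by applying a modality $L_q$ to a formula that already realizes the set being measured (available from an earlier stage), with $q$ determined by the parameters of the generator together with $\e$; this uses the evaluation of formulae $L_q\psi$ given by Lemma~\ref{measurability}, together with $\llbracket\top\rrbracket_\e=M$. Since $\lng^+$ is closed under $\land$, $\lor$ and all $L_r$, and these connectives act on $\e$-extensions exactly as intersection, union and formation of a further $\Theta$-generator, the induction propagates and shows that every $C\in G_\M$ equals $\llbracket\psi\rrbracket_\e$ for some $\psi\in\lng^+$. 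Since $\lng$ additionally contains $\lnot$, which acts on $\e$-extensions as complement, the same induction along the construction of $\ol{G_\M}$ from $\Theta$ gives $\ol{G_\M}\subseteq\{\llbracket\phi\rrbracket_\e\mid\phi\in\lng\}$.

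The main obstacle is the modal step of the first family of inclusions: one must be certain that applying $L_r$ to a formula whose $\e$-extension lies in $G_\M$ yields a formula whose $\e$-extension is still in $G_\M$ --- not just in $\Sigma$, and not merely in $\ol{G_\M}$ --- which is exactly a matter of matching the sets $\{m\mid\theta(m)(C)\ge\rho\}$ that surface during the induction against the generators declared in $\Theta$, and is the reason $\Theta$ is set up the way it is (and the reason the first identity is confined to $\lng^+$). One point deserves emphasis, since it is what makes the statement uniform in $\e$: although the formula witnessing a fixed set $C\in G_\M$ depends on $\e$ --- its modal indices are shifted by $\e$, as visible in $\llbracket L_r\psi\rrbracket_\e=\{m\mid\theta(m)(\llbracket\psi\rrbracket_\e)\ge r-\e\}$ --- the \emph{collection} $\{\llbracket\phi\rrbracket_\e\mid\phi\in\lng^+\}$ does not depend on $\e$ at all: it equals $G_\M$ for every rational $\e\ge0$, and likewise $\{\llbracket\phi\rrbracket_\e\mid\phi\in\lng\}=\ol{G_\M}$ for every such $\e$.
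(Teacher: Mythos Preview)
Your overall strategy --- rerun the structural induction underlying Lemma~\ref{measurability} and track which closure operations appear --- is exactly what the paper intends: the corollary is stated there without proof, with the remark that ``the proof of the previous lemma reveals'' it.

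There is, however, a real gap in your inclusion $G_\M\subseteq\{\llbracket\phi\rrbracket_\e\mid\phi\in\lng^+\}$. The base family $\Theta$ is indexed by $r\in\preals$, whereas the modal indices in $\lng$ range only over $\prats$. When you say a basic generator ``is realized by a formula of $\lng^+$ \dots with $q$ determined by the parameters of the generator together with $\e$'', you are tacitly setting $q=\rho+\e$ for some real $\rho$; if $\rho\notin\prats$ then $q\notin\prats$ and no formula $L_q\psi$ exists. So the equality cannot hold as stated unless $\Theta$ is restricted to rational parameters --- this is surely what is intended, but you should make it explicit rather than silently assume it.

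A secondary point: the paper's expression $\theta(m)^{-1}([0,r])$ for the elements of $\Theta$ does not type-check as a subset of $M$, so you are already (reasonably) guessing at the intended meaning when you identify $\llbracket L_r\psi\rrbracket_\e$ with ``one of the bisimulation generators making up $\Theta$''. Your modal step for the full language $\lng$ in fact needs more than membership in $\Theta$: for $\phi=L_r\psi$ with $\psi\in\lng$ the induction only yields $\llbracket\psi\rrbracket_\e\in\ol{G_\M}$, and you then require $\ol{G_\M}$ to be closed under $C\mapsto\{m:\theta(m)(C)\ge\rho\}$ --- not just under $\cup,\cap$ and complement, which is all the paper's definition of $\ol{G_\M}$ explicitly provides. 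You should spell out which reading of $\Theta$ (and hence of $G_\M$, $\ol{G_\M}$) you are adopting and verify that it delivers this closure.
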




The major advantage that the parametric semantics provides is that one
can handle in parallel properties from different semantics and, for
instance, can prove ($\e+\e'$)-satisfiability properties from
properties concerning $\e$-satisfiability.

In what follows we establish a few such results. The first lemma establishes the relation between $\e$-semantics and the classic semantics.

\begin{lemma}\label{l1}
  If $\phi\in\lng^+$, then for arbitrary $\e,\e'\geq 0$,
  $m\models_\e\phi$ implies $m\models_{\e+\e'}\phi$. In particular,
  $m\models\phi$ implies $m\models_\e\phi$.
\end{lemma}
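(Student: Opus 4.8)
The plan is to proceed by structural induction on $\phi\in\lng^+$, with $\e,\e'\geq 0$ fixed throughout. The base case $\phi=\top$ is immediate, since $m\models_\e\top$ and $m\models_{\e+\e'}\top$ hold for every $m$. For the Boolean cases I would use that $\lng^+$ contains only conjunction and disjunction and that both are interpreted pointwise in the $\e$-semantics: if $\phi=\psi_1\land\psi_2$ and $m\models_\e\phi$, then $m\models_\e\psi_i$ for $i=1,2$, so by the induction hypothesis $m\models_{\e+\e'}\psi_i$ and hence $m\models_{\e+\e'}\phi$; the case $\phi=\psi_1\lor\psi_2$ is analogous once one unfolds $\psi_1\lor\psi_2$ as $\lnot(\lnot\psi_1\land\lnot\psi_2)$ and checks that $m\models_\e\psi_1\lor\psi_2$ amounts to ``$m\models_\e\psi_1$ or $m\models_\e\psi_2$''.

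The only case with real content is the modal one, $\phi=L_r\psi$. By definition $m\models_\e L_r\psi$ means $\theta(m)(\llbracket\psi\rrbracket_\e)+\e\geq r$, and the goal is $\theta(m)(\llbracket\psi\rrbracket_{\e+\e'})+\e+\e'\geq r$. The key step is to read off from the induction hypothesis the set inclusion $\llbracket\psi\rrbracket_\e\subseteq\llbracket\psi\rrbracket_{\e+\e'}$, which is exactly the statement that $m'\models_\e\psi$ implies $m'\models_{\e+\e'}\psi$ for every $m'$. Both sets are measurable by Lemma \ref{measurability}, so monotonicity of the measure $\theta(m)$ gives $\theta(m)(\llbracket\psi\rrbracket_\e)\leq\theta(m)(\llbracket\psi\rrbracket_{\e+\e'})$. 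Combining this with $\e'\geq 0$ yields
$$\theta(m)(\llbracket\psi\rrbracket_{\e+\e'})+\e+\e'\;\geq\;\theta(m)(\llbracket\psi\rrbracket_\e)+\e\;\geq\;r,$$
which is what is required. The ``in particular'' clause is then the instance $\e=0$ of the statement (recall $\models\ =\ \models_0$), with $\e'$ renamed to $\e$.

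I expect the only delicate point to be the bookkeeping in the modal case: one must apply the induction hypothesis to the immediate subformula $\psi$ with the \emph{same} pair $(\e,\e')$ in order to obtain $\llbracket\psi\rrbracket_\e\subseteq\llbracket\psi\rrbracket_{\e+\e'}$, and then invoke Lemma \ref{measurability} so that monotonicity of $\theta(m)$ legitimately applies to these sets. It is also worth a remark explaining why the restriction to $\lng^+$ is essential: for a negated subformula the relevant inclusion reverses, since $\llbracket\lnot\psi\rrbracket_{\e+\e'}=M\setminus\llbracket\psi\rrbracket_{\e+\e'}\subseteq M\setminus\llbracket\psi\rrbracket_\e=\llbracket\lnot\psi\rrbracket_\e$, so monotonicity in the parameter genuinely fails once negation — or a modal operator applied above a negation — is permitted.
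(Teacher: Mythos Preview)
The paper does not include a proof of this lemma; it is stated and immediately followed by the counter-example showing why the restriction to $\lng^+$ is necessary. Your structural induction on the $\lng^+$ grammar is correct and is the expected argument: the Boolean cases are routine, and in the modal case the induction hypothesis on $\psi$ gives the set inclusion $\llbracket\psi\rrbracket_\e\subseteq\llbracket\psi\rrbracket_{\e+\e'}$, after which monotonicity of the measure $\theta(m)$ and $\e'\geq 0$ finish the job exactly as you wrote.
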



The following counter-example shows that we cannot hope for the result to hold for negative formulae.

\begin{example}
  Consider the CMK $\M=(\{m\},2^{\{m\}},\theta)$ with
  $\theta(m)(\llbracket \top \rrbracket) =r$. Clearly $m \models \lnot
  L_{r+\delta} \top$ for all $\delta >0$. Suppose that we also have $m
  \models_\epsilon \lnot L_{r+\delta} \top$. This is equivalent to $m
  \models \lnot L_{r+\delta-\epsilon} \top$, i.e.,
  $\theta(m)(\llbracket \top \rrbracket) < r+\delta - \epsilon$ for
  all $\delta >0$. This last inequality implies $\theta(m)(\llbracket
  \top \rrbracket) \leq r-\epsilon$ which contradicts our initial
  assumption.
\end{example}

Notice that if $\e$ is growing, the set $\brck\phi_\e$ is increasing when $\phi\in\lng^+$ and is decreasing when $\phi\in\lng^-$.

Although negation turns out to be problematic in the case of the
previous lemma, we can however characterize the relation between
$\models_\e$ and $\models_{\e+\e'}$ for the entire language. To
characterize completely the relation between two parametric semantics,
we define a pair of dual encodings.

\begin{definition}
Let $\langle~\rangle_\e:\lng\to\lng$ and $\langle~\rangle^\e:\lng\to\lng$ be two functions on $\lng$ defined as follows.
  \begin{displaymath}
    \begin{array}{lll}
      \langle\top\rangle_\e=\top &~~~~& \langle\top\rangle^\e=\top \\
      \langle\phi\land\psi\rangle_\e=\langle\phi\rangle_\e\land\langle\psi\rangle_\e && \langle\phi\land\psi\rangle^\e=\langle\phi\rangle^\e\land\langle\psi\rangle^\e\\
      \langle\lnot\phi\rangle_\e=\lnot\langle\phi\rangle_\e && \langle\lnot\phi\rangle^\e=\lnot\langle\phi\rangle^\e\\
      \langle L_r\phi\rangle_\e=L_{r \dotminus \e}\langle\phi\rangle_\e && \langle L_r\phi\rangle^\e=L_{r+\e}\langle\phi\rangle^\e
    \end{array}
  \end{displaymath}
  where $r\dotminus\e= \max\{0,r-\e\}$ 
\end{definition}

Observe that for any $\phi$ that is not of type $L_r\psi$ with $r<\e$,
we have that
$\langle\langle\phi\rangle^\e\rangle_\e=\langle\langle\phi\rangle_\e\rangle^\e=\phi$.

Before we turn to the main theorem of this section, we apply the
previous definition to obtain a result on limits. The result may
additionally be considered a form of reverse implication for Lemma
\ref{l1}.

\begin{lemma}\label{l2}
  If $\phi\in\lng^+$ and for every rational $\e>0$,
  $m\models_{\e'+\e}\phi$, then $m\models_{\e'}\phi$. In particular,
  if $m\models_\e\phi$ for all rationals $\e > 0$, then also
  $m\models\phi$.
\end{lemma}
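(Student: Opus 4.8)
The plan is to proceed by induction on the structure of $\phi\in\lng^+$, exploiting the fact that positive formulae are built only from $\top$, $\land$, $\lor$ and $L_r$. First I would set up the statement in the slightly more general form: for every $\e'\geq 0$, if $m\models_{\e'+\e}\phi$ holds for all rationals $\e>0$, then $m\models_{\e'}\phi$; the ``in particular'' clause is then the case $\e'=0$. The base case $\phi=\top$ is immediate since $m\models_{\e'}\top$ always. For the conjunction case, if $m\models_{\e'+\e}(\phi_1\land\phi_2)$ for all rational $\e>0$, then $m\models_{\e'+\e}\phi_1$ and $m\models_{\e'+\e}\phi_2$ for all such $\e$, and the induction hypothesis applied to each conjunct yields $m\models_{\e'}\phi_1$ and $m\models_{\e'}\phi_2$, hence $m\models_{\e'}(\phi_1\land\phi_2)$. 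The disjunction case is the first mild subtlety: from $m\models_{\e'+\e}(\phi_1\lor\phi_2)$ for all $\e>0$ one cannot immediately fix a single disjunct, but one can argue that at least one of the two disjuncts is satisfied at $\e'+\e$ for infinitely many (hence a decreasing cofinal sequence of) rationals $\e\to 0$; by Lemma~\ref{l1} the set $\brck{\phi_i}_{\e'+\e}$ is monotone increasing in $\e$, so if $m\models_{\e'+\e_k}\phi_i$ along a sequence $\e_k\downarrow 0$ then in fact $m\models_{\e'+\e}\phi_i$ for every rational $\e>0$, and the induction hypothesis gives $m\models_{\e'}\phi_i$, whence $m\models_{\e'}(\phi_1\lor\phi_2)$.

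The modal case $\phi=L_r\psi$ is where the real work is. By definition $m\models_{\e'+\e}L_r\psi$ means $\theta(m)(\brck{\psi}_{\e'+\e})+(\e'+\e)\geq r$, i.e. $\theta(m)(\brck{\psi}_{\e'+\e})\geq r-\e'-\e$. I want to conclude $\theta(m)(\brck{\psi}_{\e'})\geq r-\e'$. The key observation is that, since $\psi\in\lng^+$, Lemma~\ref{l1} tells us the family $\{\brck{\psi}_{\e'+\e}\}_{\e>0}$ is a monotone \emph{decreasing} family of measurable sets as $\e\downarrow 0$, all containing $\brck{\psi}_{\e'}$ (again by Lemma~\ref{l1}, since $\e'\leq\e'+\e$), and in fact by the inductive content we expect $\bigcap_{\e>0}\brck{\psi}_{\e'+\e}=\brck{\psi}_{\e'}$. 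That last equality is really just the induction hypothesis read set-theoretically: $m'\in\bigcap_{\e>0}\brck{\psi}_{\e'+\e}$ says $m'\models_{\e'+\e}\psi$ for all rational $\e>0$, which by the induction hypothesis gives $m'\models_{\e'}\psi$, i.e. $m'\in\brck{\psi}_{\e'}$; the reverse inclusion is Lemma~\ref{l1}. Then continuity of the measure $\theta(m)$ from above along the decreasing sequence $\brck{\psi}_{\e'+1/n}$ (these are finite measures on a measurable space, so downward continuity applies) gives $\theta(m)(\brck{\psi}_{\e'})=\lim_{n}\theta(m)(\brck{\psi}_{\e'+1/n})\geq\lim_n (r-\e'-1/n)=r-\e'$, which is exactly $m\models_{\e'}L_r\psi$.

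The main obstacle, and the step I would be most careful about, is precisely this interchange of a countable intersection of the sets $\brck{\psi}_\e$ with the semantics and the use of downward continuity of the measure: one must check that $\theta(m)$ is a \emph{finite} measure (or at least finite on the relevant sets) so that continuity from above is valid — this holds because $\theta(m)\in\Delta(M,\Sigma)$ and CMPs have finite total rate, or one can simply note $\brck{\psi}_{\e'+1}$ has finite measure and work inside it — and that it suffices to take the intersection over the countable cofinal set $\{1/n : n\geq 1\}$ of rationals rather than all positive rationals, which is justified by the monotonicity from Lemma~\ref{l1}. A secondary point to state cleanly is that the induction must be carried with the parameter $\e'$ universally quantified (the ``for every $\e'\geq 0$'' version), since the modal case reduces the claim about $L_r\psi$ at level $\e'$ to the claim about $\psi$ at the same level $\e'$, so the induction hypothesis is needed for all base levels simultaneously. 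Once this is set up, the ``in particular'' statement about $\models=\models_0$ follows by instantiating $\e'=0$.
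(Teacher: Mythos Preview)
Your argument is correct. The structural induction on $\lng^+$ is set up properly; the conjunction and disjunction cases are handled cleanly (the pigeonhole-plus-monotonicity trick for $\lor$ is exactly right), and in the modal case you correctly identify the two ingredients: the induction hypothesis gives $\bigcap_{\e>0}\brck{\psi}_{\e'+\e}=\brck{\psi}_{\e'}$, and then continuity from above of the finite measure $\theta(m)$ along the countable decreasing chain $\brck{\psi}_{\e'+1/n}$ yields the desired inequality. Your remark about finiteness is justified in this paper, since measures are defined with codomain $\preals$, hence $\theta(m)(M)<\infty$. One small correction: the modal step only requires the induction hypothesis at the \emph{same} parameter $\e'$ (but for all states $m'$), so the universal quantification over $\e'$ is harmless but not strictly necessary for the induction to close.

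As for comparison: the paper does not actually include a proof of this lemma in the text, so there is nothing concrete to set your argument against. The sentence preceding the lemma (``we apply the previous definition to obtain a result on limits'') gestures at the encodings $\langle\,\cdot\,\rangle_\e$, $\langle\,\cdot\,\rangle^\e$, but these play no role until Theorem~\ref{t2}, and a direct induction of the kind you give is the natural route. Your proof stands on its own.
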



We are now ready to state the main theorem of this section that establishes the relation between various parameterized semantics for the entire language $\lng$.

\begin{theorem}\label{t2}
For arbitrary $\phi\in\lng$,
\begin{enumerate}
\item $m\models_{\e+\e'}\phi$ iff $m\models_\e\langle\phi\rangle_{\e'}$,
\item $m\models_\e\phi$ iff $m\models_{\e+\e'}\langle\phi\rangle^{\e'}$.
\end{enumerate}
\end{theorem}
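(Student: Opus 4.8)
The plan is to prove both statements simultaneously by a single induction on the structure of $\phi\in\lng$, since the two encodings $\langle\cdot\rangle_{\e'}$ and $\langle\cdot\rangle^{\e'}$ are defined by mutual recursion over the same grammar and the proof obligations for one feed into the other. The base case $\phi=\top$ is immediate: both encodings fix $\top$, and $m\models_\delta\top$ holds for every $\delta\geq 0$. For the Boolean cases I would exploit the fact that the $\e$-satisfiability clauses for $\lnot$ and $\land$ are the \emph{classical} Tarskian clauses at each fixed parameter, so they commute with the (homomorphic) action of both encodings on $\lnot$ and $\land$; concretely, $m\models_{\e+\e'}\lnot\phi$ iff $m\not\models_{\e+\e'}\phi$ iff (induction) $m\not\models_\e\langle\phi\rangle_{\e'}$ iff $m\models_\e\lnot\langle\phi\rangle_{\e'}=\langle\lnot\phi\rangle_{\e'}$, and similarly for $\land$ using that both conjuncts are handled by the induction hypothesis. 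The same two lines work verbatim for statement (2).

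The modal case is where the real content sits. For (1), unfolding the semantics, $m\models_{\e+\e'}L_r\phi$ means $\theta(m)(\brck{\phi}_{\e+\e'})+(\e+\e')\geq r$. The induction hypothesis for $\phi$ gives $\brck{\phi}_{\e+\e'}=\brck{\langle\phi\rangle_{\e'}}_\e$ as sets (this is exactly the set-level reformulation of statement (1) applied at $\phi$), so the condition becomes $\theta(m)(\brck{\langle\phi\rangle_{\e'}}_\e)+\e\geq r-\e'$. I then need this to be equivalent to $m\models_\e L_{r\dotminus\e'}\langle\phi\rangle_{\e'}$, i.e.\ to $\theta(m)(\brck{\langle\phi\rangle_{\e'}}_\e)+\e\geq r\dotminus\e' = \max\{0,r-\e'\}$. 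When $r\geq\e'$ these inequalities coincide, so there is nothing to do. The only delicate point is the truncation: when $r<\e'$, we have $r-\e'<0\leq\theta(m)(\cdots)+\e$, so the left-hand condition $\theta(m)(\cdots)+\e\geq r-\e'$ holds automatically, and the right-hand condition with threshold $0$ also holds automatically since measures and $\e$ are nonnegative — so again they agree. Hence the equivalence holds in all cases, and this is the step I'd flag as the main obstacle, not because it is hard but because it is the one place where the $\dotminus$ truncation must be argued rather than ignored; the remark preceding the theorem (that $\langle\langle\phi\rangle^\e\rangle_\e=\phi$ fails precisely for $L_r\psi$ with $r<\e$) is a warning that this corner is where care is needed.

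For (2), the modal case runs dually and is in fact cleaner: $m\models_\e L_r\phi$ iff $\theta(m)(\brck{\phi}_\e)+\e\geq r$, the induction hypothesis gives $\brck{\phi}_\e=\brck{\langle\phi\rangle^{\e'}}_{\e+\e'}$, so this is $\theta(m)(\brck{\langle\phi\rangle^{\e'}}_{\e+\e'})+\e\geq r$, equivalently $\theta(m)(\brck{\langle\phi\rangle^{\e'}}_{\e+\e'})+(\e+\e')\geq r+\e'$, which is exactly $m\models_{\e+\e'}L_{r+\e'}\langle\phi\rangle^{\e'} = m\models_{\e+\e'}\langle L_r\phi\rangle^{\e'}$; here no truncation intervenes because $r+\e'\geq 0$ always, so the equivalence is a plain rearrangement. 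Throughout I would lean on Lemma \ref{measurability} to ensure every set $\brck{\psi}_\delta$ appearing as an argument of $\theta(m)$ is measurable, so that all the $\theta(m)(\cdot)$ expressions are legitimate. Assembling the base case, the two Boolean cases, and the two modal cases closes the induction and establishes both parts of the theorem.
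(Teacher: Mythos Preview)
Your proof is correct and follows the natural structural induction; the paper itself omits the proof of this theorem, so there is nothing to compare against beyond confirming that your argument is sound, which it is. One minor inaccuracy worth flagging: you assert that $\langle\cdot\rangle_{\e'}$ and $\langle\cdot\rangle^{\e'}$ are ``defined by mutual recursion'' and that ``the proof obligations for one feed into the other,'' but inspecting the definitions shows each encoding recurses only on itself, and your own modal steps confirm this --- the case for (1) uses only the induction hypothesis for (1), and the case for (2) only that for (2). So the two statements admit independent inductions; doing them together is harmless but not required. Your treatment of the $\dotminus$ truncation in the modal case of (1), which is indeed the only place requiring care, is handled correctly.
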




From this last theorem we can derive a characterization of the relation between the classic semantics and the $\e$-semantics.

\begin{corollary}\label{c2}
For arbitrary $\phi\in\lng$,
\begin{enumerate}
\item $m\models_\e\phi$ iff $m\models\langle\phi\rangle_{\e}$,
\item $m\models\phi$ iff $m\models_{\e}\langle\phi\rangle^{\e}$.
\end{enumerate}
\end{corollary}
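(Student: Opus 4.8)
The plan is to obtain Corollary~\ref{c2} as an immediate specialization of Theorem~\ref{t2}, using the remark made right after the definition of $\e$-satisfiability that the classic semantics is exactly the $0$-semantics, i.e. ${\models}={\models_0}$.

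For part~(1), I would instantiate Theorem~\ref{t2}(1) with the outer parameter equal to $0$ and the shift parameter equal to $\e$. This yields, for every $\phi\in\lng$, that $m\models_{0+\e}\phi$ iff $m\models_0\langle\phi\rangle_\e$. Since $0+\e=\e$ and $\models_0{}={}\models$, this is precisely the assertion $m\models_\e\phi$ iff $m\models\langle\phi\rangle_\e$. For part~(2), I would instead instantiate Theorem~\ref{t2}(2) with outer parameter $0$ and shift $\e$, obtaining $m\models_0\phi$ iff $m\models_{0+\e}\langle\phi\rangle^\e$, which after the same simplification reads $m\models\phi$ iff $m\models_\e\langle\phi\rangle^\e$.

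Since the whole argument is just a substitution of parameters into an already-proved theorem, there is no real obstacle; the only points worth verifying are bookkeeping ones. One should check that $\e=0$ is an admissible value for the parameter in Theorem~\ref{t2} (it is, as both parameters there range over the nonnegative rationals), so that the instantiation is legitimate. If a self-contained proof were wanted instead, one could run a direct structural induction on $\phi$ mirroring the proof of Theorem~\ref{t2}, relying on the facts that $r\dotminus 0=r$ and $r+0=r$ so that $\langle\cdot\rangle_0$ and $\langle\cdot\rangle^0$ act as the identity on $\lng$; but this duplicates work already done, so I would keep the corollary's proof to the two-line specialization above.
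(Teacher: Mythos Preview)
Your proposal is correct and matches the paper's approach exactly: the paper presents Corollary~\ref{c2} without proof, stating only that it is derived from Theorem~\ref{t2}, which is precisely the specialization you carry out. Your bookkeeping checks (that $0$ is an admissible parameter and that $\models_0{}={}\models$) are the only points that need noting, and you have them.
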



\section{Parameterized Proof Theory: $\e$-Provability}

In this section we extend the metatheory and define a
\emph{parameterized proof system} for our logic that corresponds to the
parameterized semantics.  The parameterized proof system will permit us
to prove, syntactically, approximated properties of models. We should
emphasize that we will not work with ``approximated proofs'', but with
``exact proofs'' about ``approximated properties'' and this is where
the Boolean character of our metatheory plays its role.

For each rational $\e\geq 0$ we introduce a notion of $\e$-provability
denoted by $\vdash_\e$. Table \ref{AS} contains a Hilbert-style
axiomatization of $\e$-provability for our $\e$-semantics. The axioms
and rules, which are considered in addition to the axiomatization of
classic propositional logic, are stated for propositional
variables $\phi,\psi\in\lng$ and arbitrary $s,r\in\prats$.

\begin{table}[!h]
\normalsize
$$
\begin{array}{ll}
  \mbox{(A1):} & \vdash_\e L_\e\phi\\
  \mbox{(A2):} & \vdash_\e L_{r+s}\phi\to L_r\phi\\
  \mbox{(A3):} & \vdash_\e L_r(\phi\land\psi)\land L_s(\phi\land\lnot\psi)\to L_{r+s-\e}\phi\\
  \mbox{(A4):} & \vdash_\e \lnot L_r(\phi\land\psi)\land \lnot L_s(\phi\land\lnot\psi)\to \lnot L_{r+s-\e}\phi\\
  \mbox{(R1):} & \mbox{If }\vdash_\e\phi\rightarrow\psi\mbox{ then }\vdash_\e L_r\phi\to L_r\psi\\
  \mbox{(R2):} & \{L_r\phi\mid\mbox{} r<s\}\vdash_\e L_s\phi\\
  \mbox{(R3):} & \{L_r\phi\mid\mbox{} r>s\}\vdash_\e \bot\\
\end{array}
$$\caption{\label{AS}The axiomatization of $\e$-provability for CML}
\end{table}

Axiom (A1) guarantees that the rate of any transition with an $\e$-approximation is at least $0+\e=\e$; this encodes the fact that the real measure of any set cannot be negative. (A2) states that if a rate is at least $r+s$ then it is at least $r$. (A3) and (A4) encode the additive properties of measures for disjoint sets: $\brck{\phi\land\psi}_\e$ and $\brck{\phi\land\lnot\psi}_\e$ are disjoint sets of processes such that $\brck{\phi\land\psi}_\e\cup\brck{\phi\land\lnot\psi}_\e=\brck\phi_\e$. The rule (R1) establishes the monotonicity of $L_r$. In this axiomatic system we have two infinitary rules, (R2) and (R3). The first reflects the Archimedian property of rationals: if it is possible a transition from a state to a given set of states at any rate $r<s$, then the rate of the transition is at least $s$. (R3) eliminates the possibility of having transitions at infinite rates.

Now we can complete the list of parametric meta-concepts initiated in Definition \ref{meta1}.

\begin{definition}\label{meta2}
A formula $\phi\in\lng$ is \emph{$\e$-provable}, written $\vdash_\e\phi$, if either it is an instance of an axiom or it can be proved from axioms using the proof rules. A formula $\phi\in\lng$ is \emph{$\e$-consistent}, if $\phi\to\bot$ is not provable. 

Given a set $\Phi\subseteq\lng$ of formulae, we say that $\Phi$ $\e$-proves $\phi$, denoted by $\Phi\vdash_\e\phi$, if $\phi$ can be proved from axioms and the formulae of $\Phi$.
$\Phi$ is \emph{$\e$-consistent} if it is not the case that $\Phi\vdash_\e\bot$. 

For a sublanguage $\lng'\subseteq\lng$, we say that $\Phi\in\lng$ is \emph{$\lng'$-maximally $\e$-consistent} if $\Phi$ is $\e$-consistent and no formula of $\lng'$ can be added to it without
making it $\e$-inconsistent.
\end{definition}

The next theorem states that $\models_\e$ and $\vdash_\e$ agree about the class of CMPs. As before, we will simply denote $\vdash_0$ by $\vdash$ and we call to it as \emph{classic provability}.

\begin{theorem}[Soundness and Weak Completeness]\label{completeness}
  The axiomatic system of $\e$-provability is sound and complete for
  the $\e$-semantics, i.e., for any $\phi\in\lng$, $$\vdash_\e\phi\mbox{ iff }\models_\e\phi.$$
\end{theorem}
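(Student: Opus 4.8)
The plan is to prove the two directions separately: soundness directly against the definition of $\e$-satisfiability, and completeness by transporting the classic ($\e=0$) soundness-and-completeness theorem for CML through the dual encodings $\langle\cdot\rangle_\e,\langle\cdot\rangle^\e$ and Corollary~\ref{c2}, rather than rebuilding a canonical model. For soundness ($\vdash_\e\phi\Rightarrow\models_\e\phi$) I would induct on $\e$-derivations. Lemma~\ref{measurability} makes every $\brck\phi_\e$ measurable, so the clauses are meaningful; then (A1) holds because $\theta(m)(\brck\phi_\e)\ge0$, (A2) because the non-negative $s$ only helps, and (A3),(A4) because $\brck{\phi\land\psi}_\e$ and $\brck{\phi\land\lnot\psi}_\e$ partition $\brck\phi_\e$, so additivity of $\theta(m)$ gives $\theta(m)(\brck\phi_\e)=\theta(m)(\brck{\phi\land\psi}_\e)+\theta(m)(\brck{\phi\land\lnot\psi}_\e)$ while each premise contributes its rate diminished by $\e$, which is exactly why the conclusion carries $r+s-\e$. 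For the rules, (R1) is monotonicity of $\theta(m)$, (R2) is the fact that a number $\ge r$ for every rational $r<s$ is $\ge s$, and (R3) is vacuous since $\theta(m)$, being $\preals$-valued on all of $M$, is finite and hence cannot be $\ge r$ for every rational $r>s$.

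For completeness I would first establish two syntactic translation lemmas, each by induction on derivations: (i) $\vdash_\e\phi$ implies $\vdash_0\langle\phi\rangle_\e$, and (ii) $\vdash_0\chi$ implies $\vdash_\e\langle\chi\rangle^\e$. For (i), $\langle\cdot\rangle_\e$ sends each $\e$-axiom to a $0$-theorem --- e.g.\ $\langle L_\e\phi\rangle_\e=L_0\langle\phi\rangle_\e$ is the $0$-instance of (A1), and the $\langle\cdot\rangle_\e$-image of (A3) is derivable from the $0$-instances of (A3) and (A2) once the $\dotminus\e$ shifts are tracked --- and (R2),(R3) are matched by their $0$-versions after reinstating the low-rate premises $L_t\psi$, $t<\e$, via (A1)+(A2). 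For (ii), symmetrically, $\langle L_0\phi\rangle^\e=L_\e\langle\phi\rangle^\e$ is precisely the $\e$-axiom (A1), and the $0$-instance of (A3) maps under $\langle\cdot\rangle^\e$ to the $\e$-instance of (A3) with parameters $r+\e,s+\e$, since $(r+\e)+(s+\e)-\e=r+s+\e$; likewise for (A4),(R1),(R2),(R3). I would also record the routine $\e$-congruence lemma (replacement of $\e$-provably equivalent subformulas, from (R1) plus propositional reasoning) together with the observation that $\langle\langle\phi\rangle_\e\rangle^\e$ agrees with $\phi$ except at subformulas $L_r\psi$ with $r<\e$, for which $\vdash_\e L_r\psi\leftrightarrow L_\e\psi$ holds by (A1)+(A2); hence $\vdash_\e\phi\leftrightarrow\langle\langle\phi\rangle_\e\rangle^\e$. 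Then, given $\models_\e\phi$: by Corollary~\ref{c2} this is equivalent to $\models_0\langle\phi\rangle_\e$, hence to $\vdash_0\langle\phi\rangle_\e$ by classic completeness; (ii) gives $\vdash_\e\langle\langle\phi\rangle_\e\rangle^\e$, and the congruence lemma yields $\vdash_\e\phi$. The converse is the soundness paragraph above (or: $\vdash_\e\phi$ gives $\vdash_0\langle\phi\rangle_\e$ by (i), hence $\models_0\langle\phi\rangle_\e$, hence $\models_\e\phi$ by Corollary~\ref{c2}).

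The parameter $\e$ enters only as harmless $\dotminus\e$ and $+\e$ shifts, so the real content is the classic $\e=0$ completeness theorem for this $L_r$-only CML. If it can be cited from \cite{Cardelli11a}, the rest is bookkeeping; if it must be proved here, the genuine difficulty --- as in every such completeness theorem --- is the canonical-model construction: states are $\lng$-maximally consistent sets, $\Sigma_c$ is generated by the sets $\{\Gamma\mid\phi\in\Gamma\}$, and $\theta_c(\Gamma)(\{\Gamma'\mid\phi\in\Gamma'\})$ is forced to be $\sup\{r\mid L_r\phi\in\Gamma\}$ (with $r\dotminus\e$ in place of $r$ in the $\e$-version). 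The delicate steps there are: extending $\theta_c(\Gamma)$ from the generating algebra to a genuine countably additive finite measure (finite additivity from (A3),(A4); $\sigma$-additivity from the Archimedean rule (R2); finiteness from (R3)); showing the canonical state space is analytic, being a Borel subset of the Cantor space $2^\lng$ cut out by the Borel conditions that the infinitary rules impose; and the Truth Lemma, whose modal clause is exactly where (A1),(A2),(R2) are used to match ``$L_r\phi\in\Gamma$'' with the $\theta_c(\Gamma)$-mass of $\{\Gamma'\mid\phi\in\Gamma'\}$.
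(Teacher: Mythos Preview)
Your proposal is correct and follows essentially the same route as the paper: reduce $\e$-soundness and $\e$-completeness to the classic ($\e=0$) result of \cite{Cardelli11a} via the syntactic encodings $\langle\cdot\rangle_\e,\langle\cdot\rangle^\e$ together with Corollary~\ref{c2}. The paper simply asserts the key biconditional $\vdash_\e\phi\Leftrightarrow{\vdash}\langle\phi\rangle_\e$ from the observation that the $\e$-axioms are the $\langle\cdot\rangle_\e$-images of the classic ones, whereas you justify it more carefully via your one-directional translation lemmas (i),(ii) and the $\dotminus$-congruence step; your direct soundness induction and canonical-model sketch are extra detail the paper does not include but also does not need.
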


\begin{proof}
  In \cite{Cardelli11a} we have shown that in table \ref{AS1} we have
  a sound and complete axiomatizaion of the classic provability for
  the classic semantics. In other words, we have proved that
  $\vdash\phi$ iff $\models\phi$.

\begin{table}[!h]
\normalsize
$$
   \begin{array}{ll}
        \mbox{(B1):} & \vdash L_0\phi\\
        \mbox{(B2):} & \vdash L_{r+s}\phi\to L_r\phi\\
        \mbox{(B3):} & \vdash L_r(\phi\land\psi)\land L_s(\phi\land\lnot\psi)\to L_{r+s}\phi\\
        \mbox{(B4):} & \vdash \lnot L_r(\phi\land\psi)\land \lnot L_s(\phi\land\lnot\psi)\to \lnot L_{r+s}\phi\\
        \mbox{(S1):} & \mbox{If }\vdash\phi\rightarrow\psi\mbox{ then }\vdash L_r\phi\to L_r\psi\\
        \mbox{(S2):} & \{L_r\phi \mid r<s\}\vdash L_s\phi\\
        \mbox{(S3):} & \{L_r\phi\mid r>s\}\vdash\bot\\
   \end{array}
$$\caption{\label{AS1}The axiomatic system of classic provability}
\end{table}

Obviously, the axioms (A1)-(A4) are the $\langle~\rangle_\e$-encodings
of the axioms (B1)-(B4) and similarly (R1)-(R3) are the encodings of
(S1)-(S3). 

Consequently, we obtain that $\vdash_\e\phi$ iff
$\vdash\langle\phi\rangle_\e$ and $\vdash\phi$ iff
$\vdash_\e\langle\phi\rangle^{\e}$. Now, in the light of Theorem
\ref{t2} we obtain $\vdash_\e\phi$ iff $\vdash\langle\phi\rangle_\e$
iff $\models\langle\phi\rangle_\e$ iff $\models_\e\phi$.
\end{proof}

Trivial consequences of the completeness theorem, Theorem \ref{t2} and
Lemma \ref{l1} are comprised in the next lemma which establishes the
relation between various $\e$-provabilities.

\begin{lemma}\label{l4}
For arbitrary $\phi\in\lng$, and rationals $\e,\e'\geq 0$,
\begin{enumerate}
\item $\vdash_{\e+\e'}\phi$ iff $\vdash_\e\langle\phi\rangle_{\e'}$;
  in particular, $\vdash_\e\phi$ iff $\vdash\langle\phi\rangle_{\e}$.
\item $\vdash_\e\phi$ iff $\vdash_{\e+\e'}\langle\phi\rangle^{\e'}$;
  in particular, $\vdash\phi$ iff
  $\vdash_{\e}\langle\phi\rangle^{\e}$.
\item If $\phi\in\lng^+$, $\vdash_{\e'}\phi$ implies
  $\vdash_{\e+\e'}\phi$.
\end{enumerate}
\end{lemma}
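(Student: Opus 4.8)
The plan is to derive all three items as consequences of the results already assembled, without re-entering the semantic or proof-theoretic machinery. The essential observation is that Lemma~\ref{l4} is meant to be a ``trivial consequence'' of the Soundness and Weak Completeness theorem together with Theorem~\ref{t2} (its proof-theoretic analog) and Lemma~\ref{l1} --- so the work is to chain the right equivalences rather than to prove anything new.

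For item~(1), I would first observe that the equivalence $\vdash_\e\phi$ iff $\vdash\langle\phi\rangle_\e$ was already established inside the proof of Theorem~\ref{completeness} (from the fact that axioms (A1)--(A4) and rules (R1)--(R3) are exactly the $\langle~\rangle_\e$-encodings of (B1)--(B4) and (S1)--(S3)). The general statement $\vdash_{\e+\e'}\phi$ iff $\vdash_\e\langle\phi\rangle_{\e'}$ then follows by the same encoding argument applied between the $(\e+\e')$-axiom system and the $\e$-axiom system, since (A1)--(A4) at parameter $\e+\e'$ are the $\langle~\rangle_{\e'}$-encodings of (A1)--(A4) at parameter $\e$; alternatively, one can route through semantics: by Theorem~\ref{completeness}, $\vdash_{\e+\e'}\phi$ iff $\models_{\e+\e'}\phi$, which by Theorem~\ref{t2}(1) holds iff $\models_\e\langle\phi\rangle_{\e'}$, which by Theorem~\ref{completeness} again is $\vdash_\e\langle\phi\rangle_{\e'}$. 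The ``in particular'' clause is the case $\e=0$. Item~(2) is entirely symmetric, using Theorem~\ref{t2}(2) in place of Theorem~\ref{t2}(1), with the ``in particular'' clause again the case $\e=0$; indeed (2) also follows formally from (1) by substituting $\langle\phi\rangle^{\e'}$ for $\phi$ and using the remark that $\langle\langle\phi\rangle^{\e'}\rangle_{\e'}=\phi$ when $\phi$ is not of the form $L_r\psi$ with $r<\e'$, though the clean route is through completeness and Theorem~\ref{t2}(2).

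For item~(3), suppose $\phi\in\lng^+$ and $\vdash_{\e'}\phi$. By Theorem~\ref{completeness}, $\models_{\e'}\phi$, i.e., $m\models_{\e'}\phi$ for every CMP $m$. By Lemma~\ref{l1}, since $\phi\in\lng^+$, $m\models_{\e'}\phi$ implies $m\models_{\e'+\e}\phi$; hence $m\models_{\e+\e'}\phi$ for every $m$, that is $\models_{\e+\e'}\phi$. Applying Theorem~\ref{completeness} once more gives $\vdash_{\e+\e'}\phi$, as required.

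I do not expect any genuine obstacle here --- the statement is explicitly flagged as a trivial consequence. The only point demanding a little care is making sure the semantic detour in items~(1) and~(2) is applied to the right direction of Theorem~\ref{t2} (the parameter on the left of $\models$ increases by $\e'$ when we pass to $\langle~\rangle_{\e'}$, and decreases --- or rather, the smaller-parameter side is recovered --- when we pass to $\langle~\rangle^{\e'}$), and that item~(3) genuinely needs the hypothesis $\phi\in\lng^+$ because that is precisely the hypothesis of Lemma~\ref{l1}, whose failure for $\lng^-$ is witnessed by the counter-example following Lemma~\ref{l1}.
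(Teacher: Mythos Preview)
Your proposal is correct and matches the paper's approach exactly: the paper does not give a detailed proof but simply declares Lemma~\ref{l4} a ``trivial consequence'' of the completeness theorem, Theorem~\ref{t2}, and Lemma~\ref{l1}, which is precisely the chain of equivalences you spell out for each item. Your semantic-detour argument via completeness is the intended route, and your remark that item~(3) genuinely requires $\phi\in\lng^+$ because of the hypothesis of Lemma~\ref{l1} is apt.
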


The previous lemma allows us to prove a parameterized deduction
theorem that establishes the frame in which one can use various
$\e$-provabilities relations in the same proof.

\begin{theorem}[Parameterized Deduction Theorem]\label{decidability}
For positive rationals $\e$ and $\e'$, and $\phi\in\lng^+$,
\begin{enumerate}
\item if $\vdash_{\e'+\e}(\phi\to\psi)$ and $\vdash_{\e'}\phi$, then
  $\vdash_{\e'+\e}\psi$;
\item if $\vdash_{\e'+\e}(\lnot\psi\to\lnot\phi)$ and
  $\vdash_{\e'}\phi$, then $\vdash_{\e'+\e}\psi$.
\end{enumerate}
\end{theorem}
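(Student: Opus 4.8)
The plan is to reduce both statements to ordinary modus ponens in the classic provability relation $\vdash = \vdash_0$ by pushing everything through the encodings $\langle\cdot\rangle_{\e'}$ and $\langle\cdot\rangle_{\e'+\e}$, using Lemma~\ref{l4} as the dictionary between parameterized provabilities. First I would treat part (1). From $\vdash_{\e'}\phi$ and $\phi\in\lng^+$, Lemma~\ref{l4}(3) gives $\vdash_{\e'+\e}\phi$ directly. Now I have both $\vdash_{\e'+\e}(\phi\to\psi)$ and $\vdash_{\e'+\e}\phi$, so ordinary modus ponens inside the Hilbert system for $\vdash_{\e'+\e}$ (recall the axiomatization of Table~\ref{AS} is taken \emph{in addition to} classical propositional logic, which includes modus ponens) yields $\vdash_{\e'+\e}\psi$. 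So part (1) is essentially immediate once the monotonicity lemma is invoked; the only subtlety is making sure $\e'$ and $\e'+\e$ are both legitimate (nonnegative rational) parameters, which holds since $\e,\e'>0$.

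For part (2) the situation is slightly more delicate because $\lnot\psi$ and $\lnot\phi$ are negative formulae, and Lemma~\ref{l1}/\ref{l4}(3) fails for $\lng^-$, so I cannot simply lift $\vdash_{\e'}\phi$ to the larger parameter and contrapose. Instead I would argue as follows. By completeness (Theorem~\ref{completeness}) it suffices to show $\models_{\e'+\e}\psi$, i.e. that every $m$ satisfies $\psi$ in the $(\e'+\e)$-semantics. Fix $m$. From $\vdash_{\e'}\phi$ and completeness, $m\models_{\e'}\phi$; since $\phi\in\lng^+$, Lemma~\ref{l1} gives $m\models_{\e'+\e}\phi$. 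From $\vdash_{\e'+\e}(\lnot\psi\to\lnot\phi)$ and soundness we get $m\models_{\e'+\e}(\lnot\psi\to\lnot\phi)$, which in the Boolean $\e$-semantics is just the contrapositive $m\models_{\e'+\e}(\phi\to\psi)$; combining with $m\models_{\e'+\e}\phi$ yields $m\models_{\e'+\e}\psi$. Since $m$ was arbitrary, $\models_{\e'+\e}\psi$, hence $\vdash_{\e'+\e}\psi$.

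The step I expect to be the real (though small) obstacle is the one where I move $\vdash_{\e'}\phi$ up to parameter $\e'+\e$: this is exactly where the restriction $\phi\in\lng^+$ is needed, and where the Boolean-versus-intuitionistic discussion in the introduction bites — for a negative $\phi$ the implication would go the wrong way (as the counter-example after Lemma~\ref{l1} shows), so the hypothesis $\phi\in\lng^+$ is not cosmetic. Everything else is routine: modus ponens is available in both Hilbert systems, and the translation between $\vdash_\e$ and $\models_\e$ is furnished verbatim by Theorem~\ref{completeness}. One could also give a purely syntactic proof of part (2) by first establishing a syntactic analogue of Lemma~\ref{l1} restricted to $\lng^+$ and then using classical contraposition inside $\vdash_{\e'+\e}$, but routing through soundness and completeness is shorter and avoids re-proving monotonicity at the proof-theoretic level.
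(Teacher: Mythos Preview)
Your proof is correct and follows the route the paper intends: the text introduces the theorem as a direct consequence of Lemma~\ref{l4}, and your argument for part~(1) is exactly that---lift $\vdash_{\e'}\phi$ to $\vdash_{\e'+\e}\phi$ via Lemma~\ref{l4}(3) and apply modus ponens.

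For part~(2), your semantic detour works, but you have slightly misdiagnosed the difficulty. The formula that needs lifting is still $\phi\in\lng^+$, not $\lnot\phi$ or $\lnot\psi$; the implication $\lnot\psi\to\lnot\phi$ is already given at level $\e'+\e$ and needs no lifting. Since the Hilbert system for $\vdash_{\e'+\e}$ includes classical propositional logic, contraposition is available syntactically: $\vdash_{\e'+\e}(\lnot\psi\to\lnot\phi)$ immediately gives $\vdash_{\e'+\e}(\phi\to\psi)$, and part~(2) reduces to part~(1) without invoking completeness. Your route through Theorem~\ref{completeness} and Lemma~\ref{l1} is sound but unnecessary.
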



The relation between the classic semantics and the $\e$-semantics also allows us to prove the next decidability result.

\begin{theorem}[Decidability and Complexity of $\e$-satisfiability]
The problem of deciding if an arbitrary property $\phi\in\lng$ is $\e$-satisfiable, i.e., if there exists a CMP $m$ such that $m\models_\e\phi$, is decidable in {\sc Pspace}. 
\end{theorem}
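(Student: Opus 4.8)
The plan is to reduce $\e$-satisfiability to classic satisfiability and then invoke a {\sc Pspace} bound for the latter. By Corollary~\ref{c2}(1), for every $m$ and every $\phi\in\lng$ we have $m\models_\e\phi$ iff $m\models\langle\phi\rangle_\e$; hence $\phi$ is $\e$-satisfiable precisely when $\langle\phi\rangle_\e$ is satisfiable in the classic ($0$-)semantics. The encoding $\langle\cdot\rangle_\e$ is a straightforward structural recursion that leaves the shape of the formula unchanged and only replaces each rate $r$ by $r\dotminus\e=\max\{0,r-\e\}$, so $\langle\phi\rangle_\e$ is computable from $\phi$ in polynomial time and its size is linear in that of $\phi$ (up to the bit-length of the rationals involved). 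Thus the problem reduces in polynomial time to deciding classic satisfiability of CML, and it suffices to show that the latter is in {\sc Pspace}.

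For classic satisfiability one uses a bounded model property together with a Ladner-style alternating search. First, a satisfiable $\phi$ is satisfied in a \emph{finite} CMP: closing the subformulas of $\phi$ under single negation into a set $FL(\phi)$ of polynomial size, a model can be unravelled into a tree whose depth is bounded by the modal depth of $\phi$, and in which, at each node, the constraints imposed by the formulas of the form $L_r\psi$ and $\lnot L_r\psi$ holding there are finitely many linear inequalities over the masses that the successor measure assigns to the (finitely many) successor state-types. By a Carath\'eodory / linear-programming argument, whenever such a system is feasible it is feasible with a measure supported on at most polynomially many successor types and with rational masses of polynomially bounded bit-length. Hence a satisfiable $\phi$ has a model of this shape, which is a bona fide CMP, since a finite discrete state space is analytic.

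The {\sc Pspace} algorithm then proceeds as in Ladner's procedure for $\mathbf K$: existentially guess a locally consistent type $t_0\subseteq FL(\phi)$ containing $\phi$; existentially guess the polynomially many successor types and their rational masses and verify the local linear constraints induced by $t_0$; then universally pick one successor type and recurse. The recursion depth is at most the modal depth of $\phi$, and each activation record is of polynomial size, so a depth-first realisation of this alternating computation runs in polynomial space; equivalently, the computation is in alternating polynomial time, hence in {\sc Pspace}. Combined with the reduction above, this yields the claimed bound.

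The main obstacle is the local-consistency machinery for classic CML: one has to pin down a polynomial-size closure $FL(\phi)$ so that the additivity axioms (A3)/(A4) — which relate $L_r$, $L_s$ and $L_{r+s}$ — are correctly reflected as a finite linear program over successor types without the set of relevant rate thresholds blowing up, and one must check that the infinitary rules (R2), (R3), which only concern suprema of rate thresholds, impose no extra requirement on a finite model beyond meeting or avoiding the finitely many thresholds actually occurring in $\phi$. Once these local conditions are correctly isolated, the small-model bounds and the alternating search are routine.
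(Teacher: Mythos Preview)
Your reduction via Corollary~\ref{c2}(1) is precisely the paper's approach: the sentence preceding the theorem already signals that the decidability result is obtained from ``the relation between the classic semantics and the $\e$-semantics'', and the paper gives no further proof in the body. In the paper's treatment the {\sc Pspace} bound for classic ($0$-)satisfiability is taken as established elsewhere (the completeness proof already defers to \cite{Cardelli11a}, and \cite{KP} is the relevant complexity reference for modal logics with linear inequalities), so after the polynomial-time translation $\phi\mapsto\langle\phi\rangle_\e$ nothing more is argued.

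Where you differ is that you go on to sketch the Ladner/LP argument for classic CML yourself rather than citing it. That sketch is on the right track---Fischer--Ladner closure, successor ``types'' as LP variables, Carath\'eodory/small-solution bounds, alternating depth-first search---and is essentially how \cite{KP} handles this family of logics. Your own closing caveat is accurate, though: the delicate point is making sure that the local feasibility test at a node really captures (A3)/(A4) with only the finitely many thresholds occurring in $\phi$ (so that the LP has polynomial size), and that (R2)/(R3) add nothing beyond those thresholds for finite models. In the paper this is all absorbed into the citation; your sketch is a reasonable outline but would need those details filled in to stand on its own. Either way, the reduction step, which is the content specific to the $\e$-metatheory, matches the paper exactly.
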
 


Following the same proof line of Theorems \ref{completeness} and \ref{decidability} we can prove that the logic enjoys a parameterized version of finite model property.

\begin{theorem}[Finite model property]
Given an arbitrary $\e$-consistent formula $\phi\in\lng$, there exists a finite CMP $(\M,m)$ such that $m\models_\e\phi$.
\end{theorem}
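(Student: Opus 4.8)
The plan is to reduce the claim to the classical ($0$-)setting via the dual encoding $\langle\,\rangle_\e:\lng\to\lng$, exactly along the proof line of Theorem~\ref{completeness} and of the decidability result above. The single external ingredient I would invoke is the \emph{classical} finite model property for CML: every $0$-consistent formula of $\lng$ has a finite model. This can be obtained by filtrating, through the finite set of subformulas of the given formula, the canonical model built in the completeness proof of \cite{Cardelli11a}; since each modality $L_r$ only mentions rationals that actually occur in the formula, the relevant quotient is finite and the filtration lemma goes through in the standard way.

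Granting this, the argument I have in mind is short. Let $\phi\in\lng$ be $\e$-consistent, i.e.\ $\not\vdash_\e\lnot\phi$. By Lemma~\ref{l4}(1), $\vdash_\e\lnot\phi$ iff $\vdash\langle\lnot\phi\rangle_\e$, and $\langle\lnot\phi\rangle_\e=\lnot\langle\phi\rangle_\e$, so $\langle\phi\rangle_\e$ is $0$-consistent. The classical completeness theorem of \cite{Cardelli11a} then makes $\langle\phi\rangle_\e$ $0$-satisfiable, and the classical finite model property recalled above yields a \emph{finite} CMK $\M$ and a state $m\in supp(\M)$ with $m\models\langle\phi\rangle_\e$. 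Finally Corollary~\ref{c2}(1), which is stated for an arbitrary CMP and hence applies to this finite one, gives $m\models\langle\phi\rangle_\e$ iff $m\models_\e\phi$. Therefore $m\models_\e\phi$ with $(\M,m)$ finite, which is what is required.

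Two routine points keep the reduction legitimate, and I would check them explicitly. First, $\langle\,\rangle_\e$ is total on $\lng$: on a subformula of the form $L_r\psi$ with $r<\e$ it is still defined, namely $L_{r\dotminus\e}\langle\psi\rangle_\e$ with $r\dotminus\e=\max\{0,r-\e\}$, so no formula needs to be excluded from the transfer. Second, every equivalence used --- Lemma~\ref{l4}(1), classical soundness/completeness, and Corollary~\ref{c2} --- is formulated for arbitrary formulas and arbitrary CMPs, so passing to a finite model introduces no gap.

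The only genuine obstacle is the classical finite model property itself: one must be sure that the canonical-model construction underlying the completeness proof of \cite{Cardelli11a} admits a finite filtration, i.e.\ that the stochastic modalities can be captured with finitely many ``rate levels'' determined by the rationals appearing in $\phi$. Once that is settled the parameterized version costs nothing further, the encoding $\langle\,\rangle_\e$ doing all the work. Alternatively one could sidestep \cite{Cardelli11a} and filtrate directly with respect to $\models_\e$ and $\vdash_\e$, mirroring the proof of Theorem~\ref{completeness}, but the route through $\langle\,\rangle_\e$ is clearly the economical one.
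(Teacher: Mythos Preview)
Your proposal is correct and follows essentially the same route the paper indicates: the paper only remarks that the proof proceeds ``along the same proof line'' as Theorems~\ref{completeness} and~\ref{decidability}, i.e.\ by transferring the problem to the classical $0$-setting via the encoding $\langle\,\rangle_\e$ and then invoking the corresponding classical metatheorem from \cite{Cardelli11a}. Your use of Lemma~\ref{l4}(1) to carry $\e$-consistency to $0$-consistency of $\langle\phi\rangle_\e$, followed by the classical finite model property and Corollary~\ref{c2}(1) to return, is exactly this reduction; your caveat that the classical finite model property itself must be secured (and is not proved in the present paper) is appropriate and honest.
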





\section{Behavioral Properties}\label{sec:behavior}

In the previous sections we developed the parametric metatheory and prove that it enjoys most of the metaproperties of the classic metatheory. In this section we investigate the relationship between this parametric logical framework and the behavioral properties of CMPs. We begin by recalling a result proved in \cite{Cardelli11a}.

\begin{theorem}[Logical characterization of bisimulation]\label{modalcharact}
  Let $\M=(M,\Sigma,\tau)$ be a CMK and $m,m'\in M$. The following
  assertions are equivalent.
  \begin{enumerate}
  \item $m\sim m'$; 
  \item For any $\phi\in\lng$, $m\models\phi$ iff $m'\models\phi$;
  \item For any $\phi\in\lng^+$, $m\models\phi$ iff $m'\models\phi$.
  \end{enumerate}
\end{theorem}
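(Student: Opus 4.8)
The plan is to prove the cycle of implications $(1)\Rightarrow(2)\Rightarrow(3)\Rightarrow(1)$; the implication $(2)\Rightarrow(3)$ is immediate since $\lng^+\subseteq\lng$, so the substance lies in the other two.

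For $(1)\Rightarrow(2)$ I would show, by induction on the structure of $\phi\in\lng$, the slightly stronger statement that the set $\brck\phi=\{n\in M\mid n\models\phi\}$ is $\sim$-closed; since $m\sim m'$, this yields $m\models\phi$ iff $m'\models\phi$. The cases $\phi=\top$, $\phi=\lnot\psi$ and $\phi=\psi_1\land\psi_2$ are routine, using that complements and finite intersections of $\sim$-closed sets are $\sim$-closed. For $\phi=L_r\psi$: by the induction hypothesis $\brck\psi$ is $\sim$-closed, and by Lemma \ref{measurability} it is measurable, hence $\brck\psi\in\Sigma(\sim)$. Since $\sim$ is itself a stochastic bisimulation, for any pair $n\sim n'$ the defining clause of bisimulation gives $\theta(n)(\brck\psi)=\theta(n')(\brck\psi)$; therefore $n\models L_r\psi$ iff $\theta(n)(\brck\psi)\geq r$ iff $\theta(n')(\brck\psi)\geq r$ iff $n'\models L_r\psi$, so $\brck{L_r\psi}$ is $\sim$-closed. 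This closes the induction.

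For $(3)\Rightarrow(1)$, put
\[
  \mathfrak R=\{(n,n')\mid n\models\psi\text{ iff }n'\models\psi\text{ for every }\psi\in\lng^+\}.
\]
Hypothesis $(3)$ says precisely that the given pair $(m,m')$ lies in $\mathfrak R$, so it suffices to prove that $\mathfrak R$ is a stochastic bisimulation; then $m\sim m'$ follows because $\sim$ is the largest such relation. By Theorem \ref{t1} it is enough to verify that $\theta(n)(C)=\theta(n')(C)$ for every $C\in G_\M$ whenever $(n,n')\in\mathfrak R$. By Corollary \ref{c1} (instantiated at $\e=0$), every $C\in G_\M$ has the form $\brck\psi$ for some $\psi\in\lng^+$. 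Fix such a $\psi$ and a pair $(n,n')\in\mathfrak R$; for every $r\in\prats$ the formula $L_r\psi$ again belongs to $\lng^+$, hence
\[
  \theta(n)(\brck\psi)\geq r\ \Longleftrightarrow\ n\models L_r\psi\ \Longleftrightarrow\ n'\models L_r\psi\ \Longleftrightarrow\ \theta(n')(\brck\psi)\geq r .
\]
Because measures are non-negative and $\prats$ is dense in $\preals$, this forces $\theta(n)(\brck\psi)=\theta(n')(\brck\psi)$. Thus $\mathfrak R$ meets the condition of Theorem \ref{t1} and is a stochastic bisimulation.

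The step I expect to be the main obstacle is the passage, inside $(3)\Rightarrow(1)$, from coincidence of the transition measures on the syntactically generated family $\{\brck\psi\mid\psi\in\lng^+\}$ to coincidence on all $\mathfrak R$-closed measurable sets, which is what $\mathfrak R$ being a bisimulation literally demands — this is where the analytic-space assumption on $(M,\Sigma)$ and a generator/monotone-class argument are genuinely needed. Fortunately, exactly this difficulty has already been isolated and dispatched in Theorem \ref{t1}, together with the identification $G_\M=\{\brck\psi\mid\psi\in\lng^+\}$ provided by Corollary \ref{c1}; granting those (and Lemma \ref{measurability} for well-definedness of the modal clause), the argument above is elementary.
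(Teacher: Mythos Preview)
Your argument is correct and is essentially the standard Hennessy--Milner style proof, nicely leveraging Theorem~\ref{t1} and Corollary~\ref{c1} to discharge the hard measure-theoretic part of $(3)\Rightarrow(1)$; note, however, that the paper does not actually prove Theorem~\ref{modalcharact} at all --- it is simply \emph{recalled} from \cite{Cardelli11a}, so there is no in-paper proof to compare against. Your observation that the real work (extending agreement from the syntactic $\pi$-system $\{\brck\psi\mid\psi\in\lng^+\}$ to all $\mathfrak R$-closed measurable sets via the analytic-space machinery) is already encapsulated in Theorem~\ref{t1} together with Corollary~\ref{c1} is exactly right, and is presumably how the cited paper organizes the argument as well.
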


Because the encodings $\langle~\rangle_\e$ and $\langle~\rangle^\e$ preserve the logical implication, a consequence of the fact that CML characterizes stochastic bisimulation is the next theorem.

\begin{theorem}[Parameterized characterization of bisimulation]\label{paramcharact}
For arbitrary $\e\geq 0$, the following assertions are equivalent. 
\begin{enumerate}
\item $m\sim m'$;
\item For any $\phi\in\lng$, $m\models_\e\phi$ iff $m'\models_\e\phi$;
\item For any $\phi\in\lng^+$, $m\models_\e\phi$ iff $m'\models_\e\phi$.
\end{enumerate}
\end{theorem}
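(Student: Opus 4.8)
The plan is to reduce both nontrivial directions to the classic logical characterization of bisimulation (Theorem~\ref{modalcharact}), transporting formulae between parameters by means of the dual encodings and the equivalences of Corollary~\ref{c2}. The implication $(2)\Rightarrow(3)$ is immediate since $\lng^+\subseteq\lng$, so the content lies in $(1)\Rightarrow(2)$ and $(3)\Rightarrow(1)$.

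For $(1)\Rightarrow(2)$ I would proceed as follows. Assuming $m\sim m'$, fix $\phi\in\lng$; then $\langle\phi\rangle_\e\in\lng$, so Corollary~\ref{c2}(1) gives $m\models_\e\phi$ iff $m\models\langle\phi\rangle_\e$, and likewise $m'\models_\e\phi$ iff $m'\models\langle\phi\rangle_\e$, while the equivalence of items $(1)$ and $(2)$ in Theorem~\ref{modalcharact} (read in the classic semantics) gives $m\models\langle\phi\rangle_\e$ iff $m'\models\langle\phi\rangle_\e$. Composing the three equivalences yields $m\models_\e\phi$ iff $m'\models_\e\phi$.

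For $(3)\Rightarrow(1)$ the argument is symmetric but uses $\langle~\rangle^\e$ and the positive fragment. First I would check that $\langle~\rangle^\e$ maps $\lng^+$ into $\lng^+$: it fixes $\top$, commutes with $\land$, sends $L_r\psi$ to $L_{r+\e}\langle\psi\rangle^\e$ where $r+\e$ is again a positive rational (recall $\e$ is rational), and, reading $\phi\lor\psi$ as $\lnot(\lnot\phi\land\lnot\psi)$, commutes with $\lor$ as well; thus it merely shifts modal indices without disturbing the positive shape of a formula. Given this, for any $\psi\in\lng^+$ we have $\langle\psi\rangle^\e\in\lng^+$, so Corollary~\ref{c2}(2) and hypothesis $(3)$ applied to $\langle\psi\rangle^\e$ yield $m\models\psi$ iff $m\models_\e\langle\psi\rangle^\e$ iff $m'\models_\e\langle\psi\rangle^\e$ iff $m'\models\psi$. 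Hence $m$ and $m'$ agree on all of $\lng^+$ in the classic semantics, and the equivalence of items $(1)$ and $(3)$ in Theorem~\ref{modalcharact} gives $m\sim m'$.

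I expect the only point requiring genuine care to be the closure claim $\langle~\rangle^\e(\lng^+)\subseteq\lng^+$ (and, symmetrically, $\langle~\rangle_\e(\lng^+)\subseteq\lng^+$): since $\lor$ does not appear in the grammar used to define the encodings, one must observe that the encodings commute with $\lnot$ and $\land$ by definition, hence with every Boolean connective definable from these, so that a positive formula is sent to a positive formula. Beyond that, the proof is a mechanical composition of equivalences already established, requiring no further measure-theoretic or proof-theoretic input.
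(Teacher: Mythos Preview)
Your proof is correct and matches the paper's intended argument: the paper only remarks that the theorem follows from Theorem~\ref{modalcharact} because the encodings $\langle~\rangle_\e$ and $\langle~\rangle^\e$ preserve logical implication, and you have simply spelled out this transfer explicitly via Corollary~\ref{c2}. Your observation that $\langle~\rangle^\e$ (and dually $\langle~\rangle_\e$) commutes with $\lor$ and hence restricts to $\lng^+$ is exactly the point needed to make the $(3)\Rightarrow(1)$ direction go through, and is implicit in the paper's one-line justification.
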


However, the interrelations between various $\e$-semantics allow us to
prove some stronger results. For this, in what follows, we extend the
concept of bisimulation towards a notion of $\e$-orders that reflect
the approximated behaviors. Similar notions have been proposed in
literature for less general models. It is the case of the point-wise
simulation distance defined in \cite{Jou90} for discrete
probabilistic systems and similarly in \cite{Alfaro03,Alfaro09} for
weighted systems, and for general probabilistic systems in
\cite{vanBreugel01b,Desharnais04}.

Recall that for a set $C$ and a relation $R$, $C^R$ denotes the
closure of $C$ to $R$.

\begin{definition}[$\epsilon$-behavioral orders]
  Given a CMK $\M=(M,\Sigma,\theta)$, a relation $R\subseteq M\times
  M$ closed under bisimulation is
  \begin{itemize}
  \item an \emph{$\e$-behavioral order} whenever $m \mathrel{R} n$,
    implies that for any $C\in G_\M$, $$\theta(n)(C)-\theta(m)(
    C^R)\leq\e.$$
  \item an \emph{essential $\e$-behavioral order} whenever $m
    \mathrel{R} n$, implies that for any $C\in
    \ol{G_\M}$, $$\theta(n)(C)-\theta(m)( C^R)\in[0,\e].$$
  \end{itemize}
We use $\prec_e$ to denote the largest $\e$-behavioral order and
$\prec^+_\e$ to denote the largest essential $\e$-behavioral order.
\end{definition}

Observe that, in the definition of $\e$-behavioural order we can have
$\theta(n)(C)<\theta(m)(C^R)$, while for essential $\e$-behavioural
order we have always that $\theta(n)(C)\geq\theta(m)( C^R)$. Notice
also that both $\prec_\e$ and $\prec^+_\e$ are not equivalences and
that an essential $\e$-behavioral order is an $\e$-behavioral order,
i.e., $\mathord{\prec^+_\e}\subseteq \mathord{\prec_\e}$.

\begin{example}
  Figure~\ref{fig:example} shows three discrete processes with initial
  states $m$,$n$ and $o$ respectively. Their mutual relationship is
  easily shown by producing an $\e$-order. For simplicity, we have not
  represented the transitions with rate
  $0$. 
  Assuming that $s + s' = t$, we
  obtain $$R=\{(m,n),(m_1,n_1),(m_2,n_2),(m_4,n_2),(m_3,n_3),(m_5,n_3)\}\subseteq\prec_{2\e},$$
  i.e., $m\prec_{2\e} n$ and the value $2\e$ is obtained
  from $$\theta(n)(C)-\theta(m)(C^R)=2\e$$ for
  $C=\{m_1,m_2,m_4,n_1,n_2\}$. Observe in this case that
  $m_4\prec_{2\e} n_2$ even if the rate of exiting $n_2$ is smaller
  than the rate of exiting $m_4$. But for this reason we do not have
  $m\prec^+_{\e'}n$ for all $\e'>0$.
  
  Similarly, we obtain $m \prec^+_\e o$
  for $$R=\{(m,o),(m_i,o_i)_{i=1..5}\}\subseteq\prec^+_\e.$$

  In other words, the rates of $m$ are at most $2\e$-smaller than the
  corresponding rates of $n$, or larger; and $m$ has at most
  $\e$-smaller rates than the corresponding ones of $o$, but not
  larger.
  \end{example}

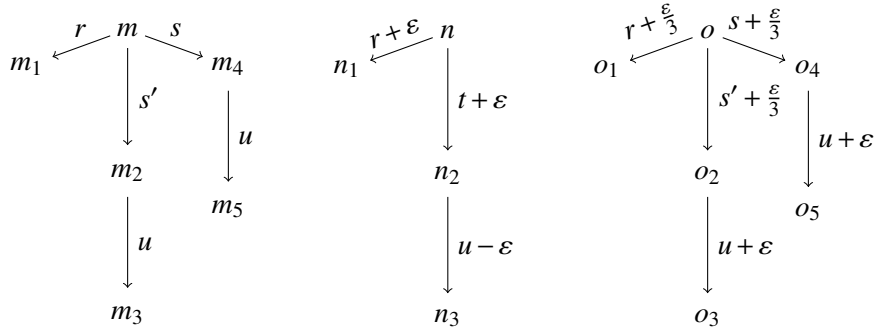
\begin{figure}[t]
  \centering
      \begin{tikzpicture}[->,shorten >=1pt, auto, node distance=.75cm,
        initial text=, scale=0.75]
        
        \begin{scope}[shape=circle, outer sep=1pt,minimum
          size=5pt,inner sep=1pt, node distance=1.9cm] 
          \node (t0) {$m$};        
          \node (t1) at (200:1.9){$m_1$};
          \node (t5) [below of=t0] {$m_2$};
          \node (t4) [below of=t5] {$m_3$};
          \node (t6) at (340:1.9){$m_4$};
          \node (t7) [below of=t6]{$m_5$};
        \end{scope}        
	            
        \begin{scope}
          \draw (t0) -- node[above] {$r$} (t1);
          \draw (t0) -- node[above] {$s$} (t6);
          \draw (t0) -- node[right] {$s'$} (t5);
          \draw (t5) -- node[right] {$u$} (t4);
          \draw (t6) -- node[right] {$u$} (t7);
       \end{scope}

     \end{tikzpicture}
\qquad%
      \begin{tikzpicture}[->,shorten >=1pt, auto, node distance=.75cm,
        initial text=, scale=0.75]
        
        \begin{scope}[shape=circle, outer sep=1pt,minimum
          size=5pt,inner sep=1pt, node distance=1.9cm] 
          \node (t0) {$n$};        
          \node (t1) at (200:1.9){$n_1$};
          \node (t5) [below of=t0] {$n_2$};
          \node (t4) [below of=t5] {$n_3$};
        \end{scope}

        \begin{scope}
          \draw (t0) -- node[above,sloped] {$r+\e$} (t1);
          \draw (t0) -- node[right] {$t+\e$} (t5);
          \draw (t5) -- node[right] {$u-\e$} (t4);
       \end{scope}
     \end{tikzpicture}
\qquad%
      \begin{tikzpicture}[->,shorten >=1pt, auto, node distance=.75cm,
        initial text=, scale=0.75]
        
        \begin{scope}[shape=circle, outer sep=1pt,minimum
          size=5pt,inner sep=1pt, node distance=1.9cm] 
          \node (t0) {$o$};        
          \node (t1) at (200:1.9){$o_1$};
          \node (t5) [below of=t0] {$o_2$};
          \node (t4) [below of=t5] {$o_3$};
          \node (t6) at (340:1.9){$o_4$};
          \node (t7) [below of=t6]{$o_5$};
        \end{scope}

        \begin{scope}
          \draw (t0) -- node[above,sloped] {$r+\frac{\e}{3}$} (t1);
          \draw (t0) -- node[right] {$s'+\frac{\e}{3}$} (t5);
          \draw (t5) -- node[right] {$u+\e$} (t4);
          \draw (t0) -- node[above] {$s+\frac{\e}{3}$} (t6);
          \draw (t6) -- node[right] {$u+\e$} (t7);
       \end{scope}
     \end{tikzpicture}
     \caption{Systems demonstrating $\e$-behavioral orders and the
       significance of stochastic transitions.}
  \label{fig:example}
\end{figure}


Applying Theorem \ref{t1} on bisimulation generators, we obatin the
following result, which shows the concept of $\e$-behavioral order
generalizes the concept of stochastic bisimulation.

\begin{lemma}\label{l5}
Any bisimulation relation is an $\e$-behavioral order for any rational $\e\geq 0$. Moreover, if $m\sim n$ then $m\prec_0 n$ and $n\prec_0 m$.
\end{lemma}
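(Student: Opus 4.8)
The plan is to prove both parts by reducing everything to Theorem~\ref{t1}, which characterizes stochastic bisimilarity via the two bisimulation generators $G_\M$ and $\ol{G_\M}$. For the first claim, let $\mathfrak R$ be an arbitrary bisimulation relation and fix a rational $\e\geq 0$; since $\sim$ is the largest bisimulation, $\mathfrak R\subseteq\mathord{\sim}$, so in particular $\mathfrak R$ is closed under bisimulation, as required by the definition of an $\e$-behavioral order. Now take $(m,n)\in\mathfrak R$ and $C\in G_\M$. By Theorem~\ref{t1}, $(m,n)\in\mathfrak R$ implies $\theta(m)(C)=\theta(n)(C)$ for every $C\in G_\M$. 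It remains to compare $\theta(m)(C^{\mathfrak R})$ with $\theta(m)(C)$: since $\mathfrak R$ is reflexive on $supp(\M)$ (or at least $C\subseteq C^{\mathfrak R}$, which holds because $\mathfrak R$ being a bisimulation relation we may assume $\mathfrak R\supseteq\mathrm{id}$, or simply note $N\subseteq N^{\mathfrak R}$ holds whenever $\mathfrak R$ is reflexive; in any case $C^{\mathfrak R}$ is the $\mathfrak R$-closure so $C\subseteq C^{\mathfrak R}$ when $\mathfrak R$ reflexive), we get $\theta(m)(C^{\mathfrak R})\geq\theta(m)(C)=\theta(n)(C)$, hence
$$\theta(n)(C)-\theta(m)(C^{\mathfrak R})\leq 0\leq\e.$$
This establishes that $\mathfrak R$ is an $\e$-behavioral order for every rational $\e\geq 0$.

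For the ``moreover'' part, apply the first part with $\e=0$ and with $\mathfrak R=\mathord{\sim}$ (which is a bisimulation relation, indeed the largest one): we obtain that $\sim$ is a $0$-behavioral order. Since $\prec_0$ is defined as the \emph{largest} $0$-behavioral order, $\mathord{\sim}\subseteq\mathord{\prec_0}$. Thus $m\sim n$ implies $m\prec_0 n$; and because $\sim$ is symmetric, $m\sim n$ also gives $n\sim m$, hence $n\prec_0 m$. That completes the proof.

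The only subtle point to get exactly right is the closure business: the definition of $C^{\mathfrak R}$ is $\{m\mid\exists n\in C,\ (n,m)\in\mathfrak R\}$, so $C\subseteq C^{\mathfrak R}$ needs $\mathfrak R$ reflexive (or at least reflexive on the relevant states). One should recall that the identity relation is always a bisimulation relation, and that one typically takes bisimulation relations to be reflexive (or closes them up); in the worst case, replacing $\mathfrak R$ by $\mathfrak R\cup\mathrm{id}$ is still a bisimulation relation and has the same closure operator up to the inclusion we need, so monotonicity of $\theta(m)$ still yields $\theta(m)(C^{\mathfrak R})\geq\theta(m)(C)$. This is the main (very mild) obstacle; everything else is an immediate invocation of Theorem~\ref{t1} and the maximality of $\prec_0$.
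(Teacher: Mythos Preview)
Your overall route --- invoke Theorem~\ref{t1} and then compare $\theta(n)(C)$ with $\theta(m)(C^{\mathfrak R})$ --- is exactly what the paper sketches (its entire proof is the clause ``Applying Theorem~\ref{t1} on bisimulation generators, we obtain the following result''). But two steps in your write-up do not go through as written.

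First, ``$\mathfrak R\subseteq\mathord\sim$'' does not make $\mathfrak R$ \emph{closed under bisimulation}; that closure means that $(m,n)\in\mathfrak R$ together with $m\sim m'$, $n\sim n'$ forces $(m',n')\in\mathfrak R$, and a tiny bisimulation such as $\{(a,b)\}$ with $a\sim b\sim c$ already fails it. Second, and more damaging, the inclusion $C\subseteq C^{\mathfrak R}$ really does need reflexivity, and your fallback of passing to $\mathfrak R\cup\mathrm{id}$ does not rescue the claim: the quantity in the definition of $\e$-behavioral order is $\theta(m)(C^{\mathfrak R})$ for the relation under test, not $\theta(m)(C^{\mathfrak R\cup\mathrm{id}})$. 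In fact, since every $C\in G_\M$ is $\sim$-closed and $\mathfrak R\subseteq\mathord\sim$, one always has $C^{\mathfrak R}\subseteq C$, so for a non-reflexive $\mathfrak R$ monotonicity of $\theta(m)$ points the wrong way and the inequality can fail outright. Both problems disappear when $\mathfrak R=\mathord\sim$: there $C^{\sim}=C$ (reflexivity gives $C\subseteq C^{\sim}$, $\sim$-closedness of $C$ gives $C^{\sim}\subseteq C$), so Theorem~\ref{t1} yields $\theta(n)(C)-\theta(m)(C^{\sim})=0\leq\e$ immediately. That handles the ``moreover'' clause cleanly via maximality of $\prec_0$, exactly as you argue, and this is the only instance the rest of the paper actually needs.
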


The next theorem generalizes the theorems \ref{modalcharact} and \ref{paramcharact} for behavioral orders. In this new context the $\e$-semantics is the key.

\begin{theorem}[Logical characterization of $\prec_\e$]\label{characterization}
  For arbitrary rational $\e\geq0$, $$[\mbox{for any }\phi\in\lng^+,
  n\models\phi\mbox{ implies }m\models_\e\phi]\mbox{ iff }m\prec_\e
  n.$$
\end{theorem}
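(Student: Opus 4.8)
The plan is to prove the two implications separately. Throughout I will rely on Corollary~\ref{c1}, which tells us that for every rational $\e\geq 0$ the collection $G_\M$ coincides with $\{\brck{\psi}_\e\mid\psi\in\lng^+\}$, so that every set in $G_\M$ is named by a positive formula in \emph{any} of the parametric semantics, and on Lemma~\ref{l1}, which gives $\brck{\psi}_0\subseteq\brck{\psi}_\e$ for $\psi\in\lng^+$ (in particular $c\models\psi$ implies $c\models_\e\psi$).

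For the direction ``$m\prec_\e n$ implies [for every $\phi\in\lng^+$, $n\models\phi$ implies $m\models_\e\phi$]'', I would take $R=\mathord{\prec_\e}$, which is itself an $\e$-behavioural order, and prove by induction on the structure of $\phi\in\lng^+$ the uniform statement: \emph{for all states $a,b$ with $a\prec_\e b$, $b\models\phi$ implies $a\models_\e\phi$}. The cases $\top$, $\phi_1\land\phi_2$ and $\phi_1\lor\phi_2$ are immediate from the induction hypothesis and the clauses defining $\models_\e$. The case $\phi=L_r\psi$ is the heart of the matter: from $b\models L_r\psi$ we get $\theta(b)(\brck{\psi}_0)\geq r$; since $\brck{\psi}_0\in G_\M$, the $\e$-behavioural-order inequality at the pair $(a,b)$ yields $\theta(a)\big((\brck{\psi}_0)^{\prec_\e}\big)\geq r-\e$; and the induction hypothesis applied to $\psi$ shows $(\brck{\psi}_0)^{\prec_\e}\subseteq\brck{\psi}_\e$, because any state which is $\prec_\e$-below some state satisfying $\psi$ must $\e$-satisfy $\psi$. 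Monotonicity of the measure $\theta(a)$ then gives $\theta(a)(\brck{\psi}_\e)+\e\geq r$, i.e.\ $a\models_\e L_r\psi$. Taking $(a,b)=(m,n)$ gives the implication.

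For the converse I would consider the purely logical relation $\mathord{\sqsubseteq_\e}=\{(a,b)\mid\text{for every }\phi\in\lng^+,\ b\models\phi\text{ implies }a\models_\e\phi\}$ and show $\mathord{\sqsubseteq_\e}\subseteq\mathord{\prec_\e}$; then the hypothesis $m\sqsubseteq_\e n$ gives $m\prec_\e n$. Since $\prec_\e$ is the largest $\e$-behavioural order, it suffices to verify that $\sqsubseteq_\e$ is a bisimulation-closed $\e$-behavioural order. Closure under bisimulation follows at once from Theorems~\ref{modalcharact} and~\ref{paramcharact}, which say that bisimilar states satisfy the same formulae of $\lng$ both classically and in the $\e$-semantics. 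For the $\e$-behavioural-order inequality, fix $(a,b)\in\mathord{\sqsubseteq_\e}$ and $C\in G_\M$, and write $C=\brck{\psi}_0$ with $\psi\in\lng^+$ (Corollary~\ref{c1}); instantiating the definition of $\sqsubseteq_\e$ with the formulae $L_r\psi$ for rationals $r<\theta(b)(\brck{\psi}_0)$ and passing to the supremum over such $r$ (the Archimedean step formalised by rule (R2)) gives $\theta(a)(\brck{\psi}_\e)\geq\theta(b)(C)-\e$; it then remains to compare $\brck{\psi}_\e$ with the relational closure $C^{\sqsubseteq_\e}$ in order to conclude $\theta(b)(C)-\theta(a)(C^{\sqsubseteq_\e})\leq\e$.

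I expect the main obstacle to be exactly this last comparison, and more generally the bookkeeping around the relational closures $C^R$: one must ensure that these sets are measurable — which is needed even to state the $\e$-behavioural-order condition and, as for ordinary bisimulation, presumably rests on the analyticity of the state space together with $R$ being bisimulation-closed — and that they are sandwiched correctly between the exact extension $\brck{\psi}_0$ and the $\e$-thickened extension $\brck{\psi}_\e$. This is the single point where the semantic notion of $\e$-behavioural order really has to be matched against the logical notion of $\e$-satisfaction, and it is where the generator machinery of Theorem~\ref{t1} and Corollary~\ref{c1} does the work; the Boolean, classical character of the parametric metatheory is precisely what allows the induction in the first direction and the maximality argument in the second to be carried out without the ad hoc restrictions on negation that plague the non-classical approaches.
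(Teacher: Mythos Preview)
Your overall architecture—structural induction over $\lng^+$ for one implication, and verifying that the logically defined relation $\sqsubseteq_\e$ is itself an $\e$-behavioural order for the other—is exactly how such characterisations are proved, and it is the approach the paper takes. The substantive gap is in the one step you already flag as delicate, and it bites already in the forward direction, not only in the converse.

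In the inductive case $L_r\psi$ you claim $(\brck{\psi}_0)^{\prec_\e}\subseteq\brck{\psi}_\e$ ``because any state which is $\prec_\e$-below some state satisfying $\psi$ must $\e$-satisfy $\psi$''. That sentence \emph{is} your induction hypothesis, but it does not describe $(\brck{\psi}_0)^{\prec_\e}$. By the paper's convention $N^{R}=\{c\mid\exists d\in N,\ (d,c)\in R\}$, so $c\in(\brck{\psi}_0)^{\prec_\e}$ means there is $d\models\psi$ with $d\prec_\e c$: the witness $c$ lies $\prec_\e$-\emph{above} $d$, not below. From $d\prec_\e c$ and $d\models\psi$ the hypothesis ``$a'\prec_\e b'$ and $b'\models\psi$ imply $a'\models_\e\psi$'' yields nothing about $c$. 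What your argument actually proves is $(\brck{\psi}_0)^{\prec_\e^{-1}}\subseteq\brck{\psi}_\e$, the inclusion for the \emph{inverse} closure. That is the direction one expects from the simulation reading (successors of the smaller process are matched by $\prec_\e$-smaller successors of the larger one), and it is consistent with the worked example once ``closed under bisimulation'' is read as $R\supseteq{\sim}$; but with the closure oriented as written, your $L_r$ step does not go through.

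The same orientation issue is what blocks the converse. After the Archimedean step you correctly obtain $\theta(a)(\brck{\psi}_\e)\geq\theta(b)(\brck{\psi}_0)-\e$, and to conclude you need $\brck{\psi}_\e\subseteq(\brck{\psi}_0)^{\sqsubseteq_\e}$. Unfolding, this asks that every $c\models_\e\psi$ be $\sqsubseteq_\e$-above some $d\models\psi$; nothing in the definition of $\sqsubseteq_\e$ manufactures such a $d$, and reflexivity only gives $\brck{\psi}_0\subseteq(\brck{\psi}_0)^{\sqsubseteq_\e}$, which is too weak because $\brck{\psi}_0\subsetneq\brck{\psi}_\e$ in general. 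So the ``remaining comparison'' you defer is not bookkeeping: it is exactly the point where the orientation of $C^R$ must be reconciled with the direction of the logical implication ``$n\models\phi\Rightarrow m\models_\e\phi$''. You should settle this explicitly—either by arguing that the behavioural-order clause is intended with the inverse closure (in which case both directions of your proof become straightforward), or by replacing the inclusion argument with one that uses Corollary~\ref{c1} to rename $C\in G_\M$ as $\brck{\psi'}_\e$ in the appropriate semantics so that the closure and the $\e$-extension coincide by construction.
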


The previous theorem can be further generalized to comprise also the
negative formulae. In order to do that, because there is an asymmetry
between the behavior of the positive and negative formulae in the
$\e$-semantics, we will need an extra encoding that we define
below. This encoding assumes that formulae are in disjunctive
normal form (when $L_r\phi$ are considered atoms).

$$\begin{array}{ll}
|\phi|_\e= & \left\{
\begin{array}{ll}
\top & \textrm{if } \phi=\top\\
\bot & \textrm{if } \phi=\bot\\
L_r|\psi|_\e & \textrm{if } \phi=L_r\psi\\
\lnot L_{r+\e}|\psi|_\e & \textrm{if }\phi=\lnot L_r\psi\\
|\psi_1|_\e\land|\psi_2|_\e & \textrm{if } \phi=\psi_1\land\psi_2\\
|\psi_1|_\e\lor|\psi_2|_\e & \textrm{if } \phi=\psi_1\lor\psi_2

\end{array}\right. 
\end{array}$$ 

With this encoding we can state the generalization of the previous theorem. 

\begin{theorem}[Logical characterization of $\prec^+_\e$]\label{generalization}
For arbitrary rational $\e\geq0$, $$[\mbox{for any }\phi\in\lng, n\models\phi\mbox{ implies }m\models_\e|\phi|_\e]\mbox{ iff }m\prec^+_\e n.$$
\end{theorem}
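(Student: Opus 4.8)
The plan is to follow the proof of Theorem~\ref{characterization} almost verbatim in shape, with two modifications forced by the presence of negation: plain formulae are replaced by their $|\cdot|_\e$‑encodings, and the \emph{two‑sided} constraint $\theta(n')(C)-\theta(m')(C^R)\in[0,\e]$ defining an \emph{essential} $\e$‑behavioral order is exploited, its ``$\le\e$'' half matched against positive modal formulae and its ``$\ge 0$'' half against negated ones. Since $|\cdot|_\e$ is only defined on formulae in disjunctive normal form (with the $L_r\psi$ as atoms), I first record that every $\phi\in\lng$ is classically (i.e.\ $0$‑) equivalent to such a formula and that $n\models\cdot$ depends only on the classic equivalence class, so we may assume $\phi$ is already in that form.

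For the direction $m\prec^+_\e n \Rightarrow [\,n\models\phi\text{ implies }m\models_\e|\phi|_\e\,]$, I would fix an essential $\e$‑behavioral order $R$ with $m\mathrel R n$ and prove, simultaneously for all pairs $(m',n')\in R$ and by structural induction on $\phi$, that $n'\models\phi$ implies $m'\models_\e|\phi|_\e$. The cases $\top,\bot,\land,\lor$ are immediate because $|\cdot|_\e$ commutes with the connectives and $\models,\models_\e$ are Boolean. For $\phi=L_r\psi$: from $n'\models L_r\psi$ we get $\theta(n')(\brck{\psi}_0)\ge r$; since $\brck{\psi}_0\in\ol{G_\M}$ by Corollary~\ref{c1}, the upper half $\theta(n')(\brck{\psi}_0)-\theta(m')(\brck{\psi}_0^{R})\le\e$ gives $\theta(m')(\brck{\psi}_0^{R})\ge r-\e$, and the induction hypothesis on $\psi$ (in the form of the set identity below) turns $\brck{\psi}_0^{R}$ into $\brck{|\psi|_\e}_\e$, whence $\theta(m')(\brck{|\psi|_\e}_\e)+\e\ge r$, i.e.\ $m'\models_\e L_r|\psi|_\e=|L_r\psi|_\e$. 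For $\phi=\lnot L_r\psi$: from $n'\models\lnot L_r\psi$ we get $\theta(n')(\brck{\psi}_0)<r$, and this time the lower half $\theta(m')(\brck{\psi}_0^{R})\le\theta(n')(\brck{\psi}_0)$ together with the same set identity gives $\theta(m')(\brck{|\psi|_\e}_\e)<r$, which unfolds to $m'\models_\e\lnot L_{r+\e}|\psi|_\e=|\lnot L_r\psi|_\e$ — the extra $+\e$ in that clause of $|\cdot|_\e$ being exactly what absorbs the slack of the $\e$‑semantics at $L_{r+\e}$.

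For the converse, I would put $R:=\{(m',n')\mid\forall\phi\in\lng,\ n'\models\phi\Rightarrow m'\models_\e|\phi|_\e\}$, so that $m\mathrel R n$ holds by hypothesis. Closure of $R$ under bisimulation is immediate from Theorems~\ref{modalcharact} and~\ref{paramcharact}: bisimilar states satisfy the same formulae both classically and in the $\e$‑semantics, so one may pre/post‑compose with $\sim$. For the quantitative clause, take $(m',n')\in R$ and $C\in\ol{G_\M}$; by Corollary~\ref{c1} write $C=\brck{\psi}_0$ for some $\psi\in\lng$. Applying the defining property of $R$ to the formulae $L_r\psi$ (for rational $r\le\theta(n')(\brck{\psi}_0)$) and to $\lnot L_r\psi$ (for rational $r>\theta(n')(\brck{\psi}_0)$), and passing to $\sup$, respectively $\inf$, over those rationals — legitimate by the Archimedean content behind rule (R2) and by Lemma~\ref{measurability} — sandwiches
\[
\theta(n')(\brck{\psi}_0)-\e\ \le\ \theta(m')(\brck{|\psi|_\e}_\e)\ \le\ \theta(n')(\brck{\psi}_0),
\]
and the set identity $\brck{|\psi|_\e}_\e=\brck{\psi}_0^{R}=C^{R}$ then yields $\theta(n')(C)-\theta(m')(C^R)\in[0,\e]$, as required.

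The crux, and where I expect the real work, is precisely this set identity: that $|\cdot|_\e$ describes, in the $\e$‑semantics, exactly the $R$‑closure of the classic extension $\brck{\psi}_0$ (for the relevant $R$ it may be cleanest to prove one inclusion by induction and the other using that the witnessing order can be taken to be the largest essential $\e$‑behavioral order, which contains $\sim$ and is therefore reflexive, so the base case $\psi=\top$ reads $M=M^R$). All the design choices — insisting on an \emph{essential} order rather than a plain $\e$‑behavioral one, and the asymmetric clause $\lnot L_r\psi\mapsto\lnot L_{r+\e}|\psi|_\e$ — are spent on making the two one‑sided inequalities above align with, respectively, the positive and the negated modal formulae; getting that alignment right (and keeping the DNF normalization compatible with both $\models$ and $\models_\e|\cdot|_\e$) is the only genuinely delicate point, the remaining $\sup$/$\inf$ passages being routine in the style of Lemmas~\ref{l1}--\ref{l5}.
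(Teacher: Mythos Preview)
Your overall architecture is the right one and is exactly what the paper intends: structural induction on the DNF of $\phi$ simultaneously over all pairs in a witnessing essential $\e$-behavioral order, exploiting the upper half $\le\e$ of the constraint for $L_r\psi$ and the lower half $\ge 0$ for $\lnot L_r\psi$, and in the converse direction taking $R$ to be the logically defined relation and squeezing $\theta(m')(\brck{|\psi|_\e}_\e)$ via the formulae $L_r\psi$ and $\lnot L_r\psi$. In that sense the proposal matches the paper.

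There is, however, a genuine gap at precisely the point you flag as ``the crux'': the set identity $\brck{|\psi|_\e}_\e=\brck{\psi}_0^{R}$ is not established, and in the form you state it cannot hold. The left-hand side is independent of $R$, while the right-hand side manifestly depends on which essential $\e$-behavioral order one fixes; so equality can be hoped for at most for the canonical $R=\mathord{\prec^+_\e}$, not for the arbitrary $R$ you start the forward direction with. More importantly, your induction hypothesis only yields one inclusion: from ``for all $(a,b)\in R$, $b\models\psi\Rightarrow a\models_\e|\psi|_\e$'' you get $\{a\mid\exists b,\ (a,b)\in R,\ b\models\psi\}\subseteq\brck{|\psi|_\e}_\e$, which with the paper's convention $N^{\mathfrak R}=\{m\mid\exists n\in N,\ (n,m)\in\mathfrak R\}$ is $\brck{\psi}_0^{R^{-1}}\subseteq\brck{|\psi|_\e}_\e$, not the closure under $R$ that the definition of the order actually uses. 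Even setting the orientation issue aside, your treatment of the case $\phi=\lnot L_r\psi$ requires the \emph{reverse} inclusion $\brck{|\psi|_\e}_\e\subseteq\brck{\psi}_0^{R}$ (so that $\theta(m')(\brck{|\psi|_\e}_\e)\le\theta(m')(\brck{\psi}_0^{R})<r$), and nothing in the induction hypothesis supplies that. The same reverse inclusion is what you invoke in the converse direction to pass from the sandwich on $\theta(m')(\brck{|\psi|_\e}_\e)$ to a statement about $\theta(m')(C^R)$. You have correctly located the difficulty, but the argument you sketch for it (reflexivity of $\prec^+_\e$ giving $M=M^R$) only handles the base case $\psi=\top$; the inductive step for the reverse inclusion is the missing piece, and it is not a routine $\sup$/$\inf$ passage --- it is the place where one must actually use that $\ol{G_\M}$ is closed under complement and that the two halves of the $[0,\e]$ constraint interact across $\psi$ and $\lnot\psi$.
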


\section{The pseudometrizable space of processes}

In what follows we use $\prec_\e$ to define a canonical distance between
CMPs that resemble (and generalize) the well known point-wise
distances for particular types of CMPs such as Markov chains.

The next result guarantees that between any two systems there is a
$\prec_\e$ relation for an $\e$ that is big enough.

\begin{lemma}
For any pair of CMPs $m$ and $n$ there exists a positive rational $\e$ such that $m\prec_\e n$.
\end{lemma}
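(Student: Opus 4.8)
The plan is to exhibit, for any two CMPs $m$ and $n$, an explicit bound $\e$ that makes the relation ``closed under bisimulation and containing $(m,n)$'' an $\e$-behavioral order, and then invoke the fact that $\prec_\e$ is the largest such order. First I would move to the disjoint union $\M=\M_1\uplus\M_2$ so that $m$ and $n$ live in a single CMK $\M=(M,\Sigma,\theta)$; this is legitimate since $\e$-behavioral orders are defined relative to a fixed CMK and the bisimulation generators $G_\M$ are defined there. It then suffices to find a single $\e$ and a relation $R\subseteq M\times M$, closed under bisimulation, with $m\mathrel{R}n$, such that $\theta(n)(C)-\theta(m)(C^R)\le\e$ for every $C\in G_\M$.

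The natural candidate is to take $R$ to be (the bisimulation closure of) the single pair $\{(m,n)\}$, or even all of $M\times M$; the latter is simplest because then $C^R=M$ for every nonempty $C$, so $\theta(m)(C^R)=\theta(m)(M)=\theta(m)(M)$ and the left side of the inequality becomes $\theta(n)(C)-\theta(m)(M)$, which is bounded above by $\theta(n)(M)$. Hence the key quantity to control is just the total mass $\theta(n)(M)$ (equivalently $\sup_{C\in G_\M}\theta(n)(C)$, which is at most $\theta(n)(M)$ since $M\in\Sigma$ and measures are monotone). So I would set $\e$ to be any positive rational with $\e\ge\theta(n)(M)$ — such a rational exists because $\theta(n)$ is a finite measure (the transition function of a CMK maps into $\Delta(M,\Sigma)$, and for the models at hand these measures are finite), so $\theta(n)(M)<\infty$ and we may pick a rational above it. With this choice, for every $C\in G_\M$ we get $\theta(n)(C)-\theta(m)(C^R)\le\theta(n)(C)\le\theta(n)(M)\le\e$, which is exactly the defining condition of an $\e$-behavioral order. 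Since $\prec_\e$ is by definition the largest $\e$-behavioral order and $R$ is one containing $(m,n)$, we conclude $m\prec_\e n$.

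The main obstacle is the finiteness of $\theta(n)(M)$: if one allowed arbitrary (possibly $\sigma$-finite or infinite) measures as transition kernels, no single rational $\e$ would work and the statement would fail. So the step that genuinely needs justification is that, in the class of CMPs considered here, $\theta(n)$ is a finite measure — this should follow from the definition of CMK (transitions valued in $\Delta(M,\Sigma)$ together with the rate-function reformulation mentioned after the CMP definition, which in the Markovian setting gives finite total rate out of each state) and from the infinitary rule (R3) / axiom interpretation ruling out ``transitions at infinite rates''. Once that is in hand, the rest is the short monotonicity argument above; a marginally sharper bound, if one prefers not to collapse $C^R$ to $M$, is to take $R$ to be the bisimulation closure of $\{(m,n)\}$ and bound $\theta(n)(C)-\theta(m)(C^R)\le\theta(n)(C)\le\theta(n)(M)$ in the same way, which also shows the bound depends only on $n$.
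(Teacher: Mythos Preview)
Your overall strategy---exhibit a relation containing $(m,n)$, bound the defining inequality by the total mass $\theta(n)(M)$, and pick any rational $\e$ above it---is the right one. But the main line of your argument has a genuine gap.

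The issue is the choice $R=M\times M$. The defining clause of an $\e$-behavioral order requires $\theta(q)(C)-\theta(p)(C^R)\le\e$ for \emph{every} pair $(p,q)\in R$, not only for the distinguished pair $(m,n)$. With $R=M\times M$ and $C=M\in G_\M$ (recall $M=\llbracket\top\rrbracket$) this forces $\theta(q)(M)-\theta(p)(M)\le\e$ for all $p,q\in M$, so $\e$ must dominate $\sup_{q}\theta(q)(M)-\inf_{p}\theta(p)(M)$. Nothing in the definition of a CMK bounds the total masses $\theta(q)(M)$ uniformly over $q$, so in general no finite $\e$ turns $M\times M$ into an $\e$-behavioral order. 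Your ``simplest'' choice is therefore not a valid witness.

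The variant you mention only at the end---take $R$ to be the bisimulation closure of $\{(m,n)\}$---is the correct move and should be the argument, not an afterthought. For that $R$ every pair $(m',n')\in R$ satisfies $m'\sim m$ and $n'\sim n$; since the sets in $G_\M$ are $\sim$-closed, the inequality for $(m',n')$ reduces to the one for $(m,n)$, and your estimate $\theta(n)(C)-\theta(m)(C^R)\le\theta(n)(C)\le\theta(n)(M)$ finishes the job with any rational $\e\ge\theta(n)(M)$.

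Two smaller points. First, the finiteness of $\theta(n)(M)$ is immediate: in this paper measures take values in $\preals=[0,\infty)$, so $\theta(n)(M)<\infty$ holds outright; there is no need to appeal to the proof rule (R3), which is a syntactic device, not a semantic assumption. Second, if you are uneasy about the measurability of $C^R$ for your chosen $R$, you can sidestep the coinductive definition entirely via the logical characterization of $\prec_\e$: it suffices that every $\phi\in\lng^+$ with $n\models\phi$ satisfies $m\models_\e\phi$, and for $\phi=L_r\psi$ this is automatic once $\e\ge\theta(n)(M)$, since $n\models L_r\psi$ gives $r\le\theta(n)(M)\le\e$ and then $\theta(m)(\llbracket\psi\rrbracket_\e)+\e\ge\e\ge r$.
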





This lemma allows us to define a function $$d:M\times M\to\preals\mbox{ by }d(m,n)=inf\{\e\mid m\prec_\e n\mbox{ and }n\prec_\e m\}.$$ 
As stated in the next theorem, $d$ is a pseudometric on $M$ that measures how different two systems are from the point of view of their behavior. The distance between two systems is 0 iff the systems are bisimilar.

\begin{theorem}[Pseudometric]\label{pseudometric}
  The function $d:M\times M\to\preals$ defined before is a
  pseudometric on $M$ which characterizes stochastic bisimulation,
  i.e., $$d(m,n)=0\mbox{ iff }m\sim n.$$
\end{theorem}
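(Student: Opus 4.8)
The plan is to establish the three defining properties of a pseudometric---$d(m,m)=0$, symmetry, and the triangle inequality---together with the kernel characterization $d(m,n)=0 \iff m\sim n$, using the logical characterization of $\prec_\e$ from Theorem~\ref{characterization} and the basic properties of the behavioral orders. Symmetry is immediate from the symmetric form of the definition $d(m,n)=\inf\{\e\mid m\prec_\e n\text{ and }n\prec_\e m\}$. For $d(m,m)=0$, I would invoke Lemma~\ref{l5}: since the identity relation is a bisimulation and $m\sim m$, we get $m\prec_0 m$, so $0$ belongs to the set over which the infimum is taken, forcing $d(m,m)=0$. The preceding lemma guarantees the set $\{\e\mid m\prec_\e n\text{ and }n\prec_\e m\}$ is nonempty, so $d$ is well defined and real-valued.

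For the triangle inequality $d(m,o)\le d(m,n)+d(n,o)$, the key step is a composition lemma for $\e$-behavioral orders: if $m\prec_\e n$ and $n\prec_{\e'} o$, then $m\prec_{\e+\e'} o$. I would prove this by exhibiting a witnessing relation, namely the relational composition $R = R_1 \circ R_2$ (closed under bisimulation) where $R_1$ witnesses $m\prec_\e n$ and $R_2$ witnesses $n\prec_{\e'}o$; for $C\in G_\M$ one estimates $\theta(o)(C)-\theta(m)(C^{R}) = \big(\theta(o)(C)-\theta(n)(C^{R_2})\big) + \big(\theta(n)(C^{R_2})-\theta(m)((C^{R_2})^{R_1})\big) \le \e' + \e$, using $C^{R} \supseteq (C^{R_2})^{R_1}$ (so $\theta(m)(C^R)\ge\theta(m)((C^{R_2})^{R_1})$) and the fact that $C^{R_2}\in G_\M$ by Corollary~\ref{c1} (it is $\brck{\psi}_{\e'}$ for some $\psi\in\lng^+$, hence again in $G_\M$). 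Alternatively, and perhaps more cleanly, one can argue purely logically: by Theorem~\ref{characterization}, $m\prec_\e n$ says $n\models\phi$ implies $m\models_\e\phi$ for all $\phi\in\lng^+$, and $n\prec_{\e'}o$ says $o\models\phi$ implies $n\models_{\e'}\phi$; then for $o\models\phi$ we get $n\models_{\e'}\phi$, and by Corollary~\ref{c2}(2) $n\models\langle\phi\rangle^{\e'}$ where $\langle\phi\rangle^{\e'}\in\lng^+$, whence $m\models_\e\langle\phi\rangle^{\e'}$, and one more application of the encodings yields $m\models_{\e+\e'}\phi$. Once composition holds, the triangle inequality follows: given $\delta>0$, pick $\e_1$ with $m\prec_{\e_1}n$, $n\prec_{\e_1}m$ and $\e_1<d(m,n)+\delta/2$, and $\e_2$ similarly for $n,o$; then $m\prec_{\e_1+\e_2}o$ and $o\prec_{\e_1+\e_2}m$, so $d(m,o)\le \e_1+\e_2 < d(m,n)+d(n,o)+\delta$; let $\delta\to 0$.

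For the kernel characterization, the direction $m\sim n \Rightarrow d(m,n)=0$ is Lemma~\ref{l5} again ($m\prec_0 n$ and $n\prec_0 m$, so $0$ is in the infimum set). The converse $d(m,n)=0 \Rightarrow m\sim n$ is where the real work lies: $d(m,n)=0$ means for every rational $\e>0$ there is a (possibly different) $\e'\le\e$ with $m\prec_{\e'}n$ and $n\prec_{\e'}m$, hence $m\prec_\e n$ and $n\prec_\e m$ for all rational $\e>0$ by monotonicity of the orders in $\e$. By Theorem~\ref{characterization}, for any $\phi\in\lng^+$, if $n\models\phi$ then $m\models_\e\phi$ for all rational $\e>0$, and by Lemma~\ref{l2} this gives $m\models\phi$; symmetrically $m\models\phi$ implies $n\models\phi$. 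Thus $m$ and $n$ agree on all of $\lng^+$ in the classic semantics, and by the logical characterization of bisimulation (Theorem~\ref{modalcharact}, item 3 $\Rightarrow$ item 1) we conclude $m\sim n$.

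The main obstacle I expect is the composition lemma for $\e$-behavioral orders: getting the relational composition (or the logical argument) to interact correctly with the closure operator $C\mapsto C^R$ and with the fact that $G_\M$ is closed under the relevant operations, while tracking the additive accumulation of the errors. The logical route via Theorems~\ref{characterization} and~\ref{t2}/Corollary~\ref{c2} sidesteps the measure-theoretic bookkeeping and is likely the shortest path, at the cost of invoking the (already established) characterization theorem; one should check that the encodings $\langle~\rangle^{\e'}$ keep positive formulae positive and that the round-trip $\langle\langle\phi\rangle^{\e'}\rangle_{?}$ lands back on $\phi$ with the correct parameter, which holds for $\phi\in\lng^+$ since such formulae are never of the form $L_r\psi$ with $r<\e'$ after the relevant shift---this is the one place where a small case check is genuinely needed.
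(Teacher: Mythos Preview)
The paper does not actually print a proof of Theorem~\ref{pseudometric}, so there is nothing to compare against directly; your overall strategy---symmetry from the definition, $d(m,m)=0$ and the easy half of the kernel from Lemma~\ref{l5}, the hard half of the kernel from Theorem~\ref{characterization} combined with Lemma~\ref{l2} and Theorem~\ref{modalcharact}, and the triangle inequality from a composition property of the $\e$-orders---is exactly the natural route and is correct in outline.

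There is, however, a concrete slip in your logical derivation of the composition step. From $n\models_{\e'}\phi$ you invoke Corollary~\ref{c2}(2) to get $n\models\langle\phi\rangle^{\e'}$, but that corollary goes the other way: item~(2) says $n\models\psi$ iff $n\models_{\e'}\langle\psi\rangle^{\e'}$, which does not let you strip the $\e'$. What you need is item~(1): $n\models_{\e'}\phi$ iff $n\models\langle\phi\rangle_{\e'}$. The encoding $\langle\,\cdot\,\rangle_{\e'}$ (not $\langle\,\cdot\,\rangle^{\e'}$) clearly preserves $\lng^+$, so from $m\prec_\e n$ you obtain $m\models_\e\langle\phi\rangle_{\e'}$, and then Theorem~\ref{t2}(1) gives $m\models_{\e+\e'}\phi$. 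With this correction the logical route is clean, and your closing worry about the round-trip $\langle\langle\phi\rangle^{\e'}\rangle_{\e'}$ and the case $r<\e'$ simply evaporates: no round-trip is needed, and your claim that positive formulae ``are never of the form $L_r\psi$ with $r<\e'$'' is in any case false.

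Your alternative relational argument has a genuine gap. To bound the second summand $\theta(n)(C^{R_2})-\theta(m)((C^{R_2})^{R_1})$ by $\e$ via the defining property of $R_1$, you need $C^{R_2}\in G_\M$. Corollary~\ref{c1} tells you that $G_\M=\{\brck{\psi}_\e\mid\psi\in\lng^+\}$, but there is no reason the $R_2$-closure of an arbitrary $C\in G_\M$ under an arbitrary witnessing relation $R_2$ should again be of this form; even taking $R_2=\mathord{\prec_{\e'}}$ you only get $C^{R_2}\subseteq\brck{\phi}_{\e'}$ (for $C=\brck{\phi}_0$), not equality. So the relational route, as written, does not close; stick with the logical one.
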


To conclude this section and understand the significance of this distance, we shall take a look at the following example.

\begin{example}
Consider the CMPs described in Figure \ref{ex2}. We notice that the processes with initial states $m$ and $o$ are quite similar with respect to structure and rate values; the second one has all the transitions $\e$-bigger than the first one. So a first guess will be that $d(m,o)=\e$. But this is not the case because the rate of exiting the state $m$ is $$\theta(m)(\{m_1,m_2,m_3,m_4,m_5\})=r+s+s',$$ which is $3\e$ smaller than the rate of exiting the state $o$ $$\theta(o)(\{o_1,o_2,o_3,o_4,o_5\})=r+s+s'+3\e.$$ Consequently, $d(m,o)=3\e$.

Consider now the CMPs with $m$ and $n$ as initial states and suppose, as before, that $s+s'=t$. We should notice this time that not all the transitions of the first CMP are bigger than the transitions of the second. However, every pair of transitions do not differ with more than $\e$. Since there are paired transitions that differ with exactly $\e$ value, we obtain that $d(m,n)=\e$. 
\end{example}

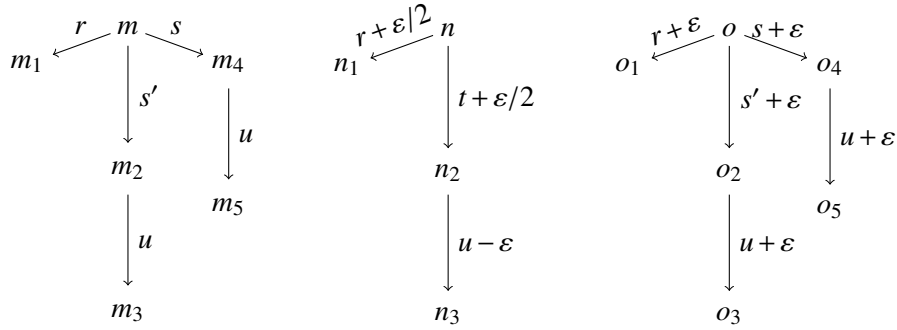
\begin{figure}[t]
  \centering
      \begin{tikzpicture}[->,shorten >=1pt, auto, node distance=.75cm,
        initial text=, scale=0.75]
        
        \begin{scope}[shape=circle, outer sep=1pt,minimum
          size=5pt,inner sep=1pt, node distance=1.9cm] 
          \node (t0) {$m$};        
          \node (t1) at (200:1.9){$m_1$};
          \node (t5) [below of=t0] {$m_2$};
          \node (t4) [below of=t5] {$m_3$};
          \node (t6) at (340:1.9){$m_4$};
          \node (t7) [below of=t6]{$m_5$};
        \end{scope}        
	            
        \begin{scope}
          \draw (t0) -- node[above] {$r$} (t1);
          \draw (t0) -- node[above] {$s$} (t6);
          \draw (t0) -- node[right] {$s'$} (t5);
          \draw (t5) -- node[right] {$u$} (t4);
          \draw (t6) -- node[right] {$u$} (t7);
       \end{scope}

     \end{tikzpicture}
\qquad%
      \begin{tikzpicture}[->,shorten >=1pt, auto, node distance=.75cm,
        initial text=, scale=0.75]
        
        \begin{scope}[shape=circle, outer sep=1pt,minimum
          size=5pt,inner sep=1pt, node distance=1.9cm] 
          \node (t0) {$n$};        
          \node (t1) at (200:1.9){$n_1$};
          \node (t5) [below of=t0] {$n_2$};
          \node (t4) [below of=t5] {$n_3$};
        \end{scope}

        \begin{scope}
          \draw (t0) -- node[above,sloped] {$r+\e/2$} (t1);
          \draw (t0) -- node[right] {$t+\e/2$} (t5);
          \draw (t5) -- node[right] {$u-\e$} (t4);
       \end{scope}
     \end{tikzpicture}
\qquad%
      \begin{tikzpicture}[->,shorten >=1pt, auto, node distance=.75cm,
        initial text=, scale=0.75]
        
        \begin{scope}[shape=circle, outer sep=1pt,minimum
          size=5pt,inner sep=1pt, node distance=1.9cm] 
          \node (t0) {$o$};        
          \node (t1) at (200:1.9){$o_1$};
          \node (t5) [below of=t0] {$o_2$};
          \node (t4) [below of=t5] {$o_3$};
          \node (t6) at (340:1.9){$o_4$};
          \node (t7) [below of=t6]{$o_5$};
        \end{scope}

        \begin{scope}
          \draw (t0) -- node[above,sloped] {$r+\e$} (t1);
          \draw (t0) -- node[right] {$s'+\e$} (t5);
          \draw (t5) -- node[right] {$u+\e$} (t4);
          \draw (t0) -- node[above] {$s+\e$} (t6);
          \draw (t6) -- node[right] {$u+\e$} (t7);
       \end{scope}
     \end{tikzpicture}
     \caption{Distances between Markovian processes.}
  \label{ex2}
\end{figure}


\section{Conclusions}

In this paper we have introduced a parametric metatheory for Continuous Markovian Logic. The parameter $\e$ of the metatheory encodes an observation error that might appear when we analyze a stochastic system. We define an $\e$-semantics and an axiomatized $\e$-proof system and we show that the $\e$-provability relation is sound and complete with respect to the $\e$-satisfiability relation. This entire logical framework also allows us to transfer metaproperties between various $\e$-levels of the metatheory. We prove a series of results regarding the connection between $\e$-satisfiability and $\e+\e'$-satisfiability and a parameterized deduction theorem that combines $\e$-provability and $\e+\e'$-provability results.

This classic metalogical framework allows us to give an uniform treatment to all logical properties, including the ones involving negative or logical implication, while avoiding unorthodox logical constructs as the real-valued
logics. The framework also supports us in identifying two canonical behavioural orders that extend stochastic bisimulation and organize the space of CMPs. These bisimulation orders are the cornerstones in the definition of a pseudometric on CMPs that measure the behavioral similarity of processes.

The metalogical framework introduced in this paper can be particularized to more specific Markovian models such as the discrete or continuous-time Markov chains. Moreover, the entire development can be adapted to specialize on the probabilistic cases, as the mathematical structure that supports the definition of CMPs is similar to the one that supports the definition of labelled Markov processes in the form of \cite{Panangaden09}.

This paper opens a series of interesting research questions regarding the relationship between $\e$-satisfiability, $\e$-provability and metric semantics. There are many open questions related to the possibility of defining a pseudometric over the class of logical formulae that shall measure $\e$-provability; for instance such that the distance between $\phi$ and $\psi$ is $0$ iff $\phi$ and $\psi$ are logical equivalent. On this direction we expect to be able to prove a version of metric completeness that relates the pseudometric space of CMPs to the pseudometric space of logical formulae. The first two authors in collaboration with Prakash Panangaden have already obtained a series of results in this direction \cite{Larsen12a,Larsen12b}; but these results do not involve the parametric metatheory yet. The hope is that the new metatheoretical perspective introduced in this paper will eventually solve some of the open problems that resisted to the other approaches.


\section*{Acknowledgement}
This research was supported by the VKR Center of Excellence MT-LAB and by the Sino-Danish Basic Research Center IDEA4CPS. Mardare was also supported by Sapere Aude: DFF-Young Researchers Grant 10-085054 of the Danish Council for Independent Research. 

Mardare would like to thank Prakash Panangaden for discussions about various aspects of the theory of Markov processes and logics that eventually allowed us to arrive to the current level of understanding of these problems. He is also grateful to Luca Cardelli, Gordon Plotkin and Vincent Danos for discussions in the past about various aspects of Markovian logics that led his research to these results. 



\end{document}